\documentclass[sigconf]{acmart}
\fancyhf{} 
\fancyfoot[C]{\thepage}


\settopmatter{printacmref=false}
\setcopyright{none}
\renewcommand\footnotetextcopyrightpermission[1]{}

\usepackage{graphicx}
\usepackage{amsmath,amsfonts}
\usepackage{booktabs}
\usepackage{caption}
\usepackage{enumitem}
\setlist[description]{style=unboxed}
\usepackage{hyperref}
\usepackage{multirow}
\usepackage{adjustbox}
\usepackage{tikz}
\usetikzlibrary{fit,cd, backgrounds}
\usepackage{tabularx}
\usepackage{algorithm}
\usepackage{algpseudocode}
\usepackage{pgfplots}
\usepgfplotslibrary{fillbetween}
\usepackage{bbding}
\usepackage{subcaption}
\usepackage{balance}

\newcounter{protocol}
\makeatletter
\newenvironment{protocol}[1][htb]{%
  \let\c@algorithm\c@protocol
  \renewcommand{\ALG@name}{Protocol}
  \begin{algorithm}[#1]%
  \begin{flushleft}
  }{\end{flushleft}
  \end{algorithm}
}
\makeatother


\newcommand{\change}[1]{#1}

\usepackage{amsthm,thm-restate}
\newtheorem{claim}{Claim}[section]

\newcommand{\keyset}{\mathcal{K}}
\newcommand{\valset}{\mathcal{V}}
\newcommand{\compnodes}{\ell}
\newcommand{\compsub}{t}
\newcommand{\geonoise}{x_k}
\newcommand{\geoparam}{r}
\newcommand{\freqk}{q_k}
\newcommand{\meank}{\mu_k}
\newcommand{\sumk}{S_k}
\newcommand{\freqnoise}{\epsilon_F}
\newcommand{\meannoise}{\epsilon_M}
\newcommand{\leaknoise}{\epsilon_L}
\newcommand{\numkeys}{n}
\newcommand{\subsetratio}{p}
\newcommand{\numkv}{|\mathcal{S}|}
\newcommand{\freq}{q}
\newcommand{\mean}{\mu}
\newcommand{\freqalg}{F}
\newcommand{\meanalg}{M}
\newcommand{\leakage}{L}
\newcommand{\observation}{Z}
\newcommand{\distinctkeys}{\lambda}
\newcommand{\minfreq}{\gamma}
\newcommand{\allkv}{\mathcal{S}}
\usepackage{cleveref}

\captionsetup[subfigure]{labelformat=empty}
\begin{document}

\title[Selective MPC]{Selective MPC: Distributed Computation of Differentially Private Key-Value Statistics}
\date{}
\author{Thomas Humphries}
\authornotemark[1]
\affiliation{University of Waterloo}
\email{thomas.humphries@uwaterloo.ca}

\author{Rasoul Akhavan Mahdavi}
\authornotemark[1]
\affiliation{University of Waterloo}
\email{rasoul.akhavan.mahdavi@uwaterloo.ca}

\author{Shannon Veitch}
\authornote{These authors contributed equally to this research.}
\affiliation{University of Waterloo}
\email{ssveitch@uwaterloo.ca}

\author{Florian Kerschbaum}
\affiliation{University of Waterloo}
\email{florian.kerschbaum@uwaterloo.ca}

\renewcommand{\shortauthors}{T. Humphries et al.}


\begin{abstract}
Key-value data is a naturally occurring data type that has not been thoroughly investigated in the local trust model.
Existing local differentially private (LDP) solutions for computing statistics over key-value data suffer from the inherent accuracy limitations of each user adding their own noise.
Multi-party computation (MPC) maintains better accuracy than LDP and similarly does not require a trusted central party. However, naively applying MPC to key-value data results in prohibitively expensive computation costs.
In this work, we present \emph{selective multi-party computation}, a novel approach to distributed computation that leverages DP leakage to efficiently and accurately compute statistics over key-value data.
By providing each party with a view of a random subset of the data, we can capture subtractive noise. We prove that our protocol satisfies pure DP and is provably secure in the combined DP/MPC model. 
Our empirical evaluation demonstrates that we can compute statistics over 10,000 keys in 20 seconds and can scale up to 30 servers while obtaining results for a single key in under a second.

\end{abstract}

\begin{CCSXML}
<ccs2012>
<concept>
<concept_id>10002978.10003029.10011703</concept_id>
<concept_desc>Security and privacy~Usability in security and privacy</concept_desc>
<concept_significance>500</concept_significance>
</concept>
</ccs2012>
\end{CCSXML}

\ccsdesc{Security and privacy~Privacy-preserving protocols; Management and querying of encrypted data.}

\keywords{multi-party computation; differential privacy} 

\maketitle


\section{Introduction}
Key-value data naturally occurs in many internet applications, such as video ad and mobile app analytics where keys are identifiers and values are time or frequencies~\cite{ye19}.
In these applications, a provider collects key-value data from a large number of users to compute aggregate statistics which provide crucial insights.
However, as individuals become more privacy-aware, they become increasingly unwilling to divulge their sensitive data to a central party. Consequently, recent work has investigated systems that simultaneously allow for the computation of useful statistics while respecting the privacy of individuals and their data. A popular method for achieving this goal is to use local differential privacy (LDP). Under this model, users take privacy into their own hands, so to speak, by perturbing their data before sending it to a central party to be processed. This technique provides a limit on the amount of information the central party can infer about each user's data while allowing the central party to compute a meaningful aggregate result. These properties have made local differential privacy a preferential solution for many organizations such as Google~\cite{google}, Apple~\cite{apple}, Mozilla~\cite{mozilla}, and Microsoft~\cite{msr17} to collect user statistics.
For example, the Chrome Web browser uses local differential privacy to collect aggregate statistics about settings such as the homepage and default search engine to counter software that maliciously alters these settings~\cite{google}. As another example, Mozilla experimented with using local differential privacy techniques to collect browser telemetry data in a private manner~\cite{mozilla}.

Applying LDP to key-value data is more challenging than numerical or categorical data.
Naively applying noise to the key implies a high loss in the accuracy of aggregate statistics, since it breaks the correlation between the key and the value. Both Ye et al.~\cite{ye19} and later Gu et al.~\cite{gu2019pckv} develop improved LDP mechanisms that reduce this accuracy loss.
While improving significantly over naive solutions, this line of work is still confined by the inherent accuracy limitations of the LDP model. Intuitively, this is because each user individually adds randomness, which compounds in the final aggregated result. 

An approach for achieving both the accuracy of the central model in DP and the trust of the local model is to use multi-party computation (MPC). 
A popular type of MPC uses secret sharing, wherein the parties jointly compute a function without any one party learning the input.
Applying MPC to key-value data is challenging because it may introduce a prohibitively large computational overhead to keep the inputs concealed.
Specifically, simultaneously computing over the keys while keeping them secret is a complex issue.
To compute and update aggregate statistics, we require an array which maintains the totals for each key.
If the key is secret shared, and thus kept confidential from each party, then indexing each value into this array using MPC incurs a cost linear in the size of the array.
This results in a total computation cost on the order of $\numkv \numkeys$, where $\numkv$ and $\numkeys$ denote the total number of key-value pairs and the number of distinct keys, respectively.
An alternative approach, using ORAM \cite{mazloom18}, has polylogarithmic cost; however, requires large constants and is typically applied to huge data set sizes.
Hence, these approaches remain impractical.
If the key is reveled to the parties or encrypted in a way that reveals the access pattern (e.g., public-key searchable symmetric encryption \cite{curtmola2011searchable}), the parties will learn the count of each keyword, which violates DP.
To prevent this violation, one could pad the counts by adding dummy key-value pairs, restoring an approximate differentially private guarantee. This requires adding an abundance of dummy records in order to achieve any reasonable privacy guarantees. Thus, this approach incurs a large communication cost.

In this paper, we design a new method of MPC called \emph{selective multi-party computation}. Selective MPC does not conceal keys, resulting in an efficient protocol which does not incur prohibitively high communication costs.
In typical MPC solutions, all parties receive shares for all key-value pairs. In contrast, our design involves randomly selecting a subset of parties to receive shares for each key-value pair. Thus, each party only views shares for a random subset of the entire dataset. In other words, our approach simulates removing some data from the view of each party, capturing subtractive noise (similar to how adding dummies in previous approaches captured additive noise). 
We compose the two approaches (selecting a subset of parties and adding dummies) to emulate a two-sided distribution without removing any data. We obtain a new theoretical result, proving that one can achieve pure differential privacy from joining two different one-sided distributions.

Our approach substantially reduces the computation costs, since our novel construction enables the keys and access patterns to be revealed while preserving DP guarantees.
Compared to the naive MPC approach, the computation complexity of the MPC is reduced by a factor of $O(\numkv)$ where $\numkv$ is the number of key-value pairs.
By simulating both subtractive and additive noise, selective MPC requires a very low number of dummy records.
Compared to the padding approach, we produce 10-20 times less dummies and achieve a low $\epsilon$ in a pure-DP model ($\delta = 0$).
Compared to the LDP mechanisms by Ye et al.~and Gu et al.~we improve accuracy by a factor of $O(\sqrt{m})$ where $m$ is the number of clients.
We give a detailed comparison to other approaches in \Cref{sec.eval}.

In summary, our contribution is a secure multi-party computation of differentially private frequency and mean estimations over key-value data with the following properties.
\begin{itemize}\itemsep0em
    \item Distributed trust and provable security guarantees.
    \item Pure differential privacy in the local trust model.
    \item High accuracy independent of the number of users.
    \item Scalability up to 10,000 keys with results in at most 20 seconds. 
\end{itemize}

\section{Preliminaries}\label{sec.prelims}
\subsection{Differential Privacy}\label{sec.dp}
Differential privacy (DP), introduced by Dwork et al.~\cite{dwork-dp}, is a common privacy notion that enables the calculation of aggregate statistics on users' data in a privacy-preserving manner. Intuitively, differential privacy states that a user's participation has a bounded effect on the final output. More formally, differential privacy can be defined as follows.

\begin{definition}[Differential Privacy]\label{def.adp}
A randomized algorithm $M:\mathcal{X}^n\mapsto \mathcal{Y}$ is $(\epsilon,\delta)$-DP, if for any pair of neighbouring datasets $X,X' \in \mathcal{X}^n$, and for any $T \subseteq \mathcal{Y}$ we have
\begin{equation}
    \Pr[M(X)\in T] \leq e^{\epsilon} \Pr[M(X')\in T] + \delta.
\end{equation}
\end{definition}

If $\delta \neq 0$, then we say that the mechanism provides approximate differential privacy. When $\delta = 0$ it satisfies pure differential privacy.
There are two main trust models in differential privacy. In the central model, we assume the existence of a trusted curator that gathers the users' data, calculates the noisy aggregate statistic, and publishes it. In the local model, users send noisy data to a (potentially untrusted) curator to calculate the aggregate statistic. Although this model eliminates the need to trust the aggregator, it comes at the cost of utility. This is because there is noise added to each data point on the order of the number of clients.

The most common approach to satisfying DP is to add random noise to the output of a function. To prove that this satisfies DP, the function must have a bounded output. We formalize this notion as the sensitivity of a function.

\begin{definition}[Sensitivity]
    Let $f:\mathcal{X}^n\mapsto \mathbb{R}^k$. If $D$ is a distance metric between elements of $\ \mathbb{R}^k$ then the $D$-sensitivity of $f$ is
    \begin{equation}
        \Delta^{(f)}_D=\max_{(X,X')} D(f(X), f(X')),
    \end{equation}
    where $(X,X')$ are pairs of neighbouring datasets.
\end{definition}
Common examples of the distance metric are the $\ell_1$-norm and $\ell_2$-norm. With the notion of sensitivity, we can now define the Laplace mechanism, which uses sensitivity based on the $\ell_1$-norm, which we denote $\Delta^{(f)}_1$ for simplicity.

\begin{definition}[Laplace Mechanism]
    Let $f:\mathcal{X}^n\mapsto \mathbb{R}^k$. The Laplace mechanism is defined as 
    \begin{equation}
        M(X) = f(X) + Lap(\Delta^{(f)}_1/\epsilon)^\numkeys,
    \end{equation}
    where $Lap(\Delta^{(f)}_1/\epsilon)^\numkeys$ represents $\numkeys$ i.i.d draws from the Laplace distribution with parameter $\Delta^{(f)}_1/\epsilon$.
\end{definition}

It is well known that the Laplace mechanism satisfies $\epsilon$-DP~\cite{dwork14}. 

\subsection{Secret Sharing}
A secret sharing scheme \cite{shamir79,blakley79} enables a secret $s$ to be split into $n$ shares that may be distributed among a set of $n$ participants. The secret is split in such a way that the $n$ shares permit reconstruction of $s$, but no set of $n-1$ or fewer shares yield any information about the secret. 

For example, consider the additive secret sharing scheme. In this scheme, to split the secret $s$, we select $n-1$ shares $s_1,\dots,s_{n-1}$ at random from the domain and let $s_n = s - \sum_{i=1}^{n-1}s_i$. Then all players can recover the secret $s$ by computing $s = \sum_{i=1}^{n}s_i$. Note that each share is computed as a fixed linear function of the secret and random elements of a field, so participants can locally compute any linear function of shared values. We use $[x]$ to denote that the value of $x$ is shared among participants. For example, given constants $v_1,v_2,v_3$ and shares of values $[x],[y]$, one can locally compute 
\[
v_1[x]+v_2[y]+v_3 = [v_1\cdot x + v_2\cdot y + v_3]
\]
to obtain shares of the value $v_1\cdot x + v_2\cdot y + v_3$. This property of the additive secret sharing scheme enables the construction of more general multi-party computation protocols \cite{cramer}.

\subsection{Secure Multi-party Computation}
Secure multi-party computation (MPC) allows a set of participants $\{p_1,\dots,p_n\}$, where $p_i$ holds private data $d_i$, to jointly compute a function $f(d_1,\dots,d_n)$, while keeping their input private and revealing only the output. We focus on a variant of MPC which performs operations over shares of the data \cite{bgw88}.

In MPC based on secret sharing, participants distribute shares of their data among a set of computation nodes. The computation nodes perform a computation over the secret shared data to obtain shares of a function of the participants' data, $f(d_1,\dots,d_n)$. Most MPC protocols represent the desired function as a circuit using basic operations over a finite field (addition and multiplication) and provide subprotocols for performing those basic operations \cite{MAURER2006370,spdz}.

We can consider two different models for secure multi-party computation: the \emph{semi-honest} (or passive) model and the \emph{malicious} (or active) model. In the semi-honest model, introduced by Goldreich \cite[\S7.2.2]{goldreich09}, adversaries do not deviate from the protocol but may gather information to infer private information. In the malicious model \cite[\S7.2.3]{goldreich09}, adversaries may deviate from the protocol by altering messages or other means to infer private information.

\section{Problem Statement}\label{sec.prob}
Our goal is to present a protocol that permits the efficient computation of differentially private statistics over key-value data while sacrificing minimal trust to any party. 
\change{A naive solution might first divide the key-value pairs into a set of keys and a set of values and then apply existing methods for computing differentially private histograms to each set. However, as Ye et al.~\cite{ye19} argue, this does not maintain the correlation between keys and values. For this reason, computing meaningful differentially private statistics over key-value data is more challenging than histograms.}

\subsection{Notation}
We have a set of $m$ clients, each denoted by an index $i \in [m]$. Each client has a set of key-value pairs $\allkv_i$. We denote the $j$th key-value pair of client $i$ to be $\langle k_{ij}, v_{ij} \rangle$, where each $k_{ij} \in \keyset$ and each $v_{ij} \in \valset$ for some sets $\keyset, \valset$. The multiset of key-value pairs owned by all $m$ clients is denoted $\allkv$. For a value or algorithm, we adopt the convention of using a subscript $k$ to restrict it to a key $k$. We summarize the notation in \Cref{tab.notation}.

\begin{table}[t]
    \centering
    \begin{tabular}{|c|c|}
         \hline
         \textbf{Symbol} & \textbf{Description}\\
         \hline
         $\mathcal{K}$ & domain of the keys \\
         $\mathcal{V}$ & domain of the values \\
         $n$ & the size of the key space $\mathcal{K}$ \\
         $m$ & the number of clients \\
         $\allkv$ & multiset of all key-value pairs \\
         $\allkv_i$ & set of key-value pairs owned by client $i$ \\
         $R$ & bound on the values in $\mathcal{V}$ \\
         $\compnodes$ & the number of computation nodes \\
         $\compsub$ & size of subset of computation nodes \\
         $\lambda$ & maximum \# of keys held by a client \\
         $\gamma$ & minimum frequency of any key \\
         $\freq$ & vector of frequencies \\
         $\mean$ & vector of means \\
         $r$ & parameter for geometric distribution \\
         $p$ & ratio of $t$/$\ell$ \\
         \hline
    \end{tabular}
    \caption{Notation}
    \label{tab.notation}
\end{table}

In this work, we focus on the following two statistics over key-value data. Let $\mathcal{A}_k = \{\langle k_{ij}, v_{ij}\rangle \in \allkv \mid k_{ij} = k\}$ be the multiset of key-value pairs corresponding to key $k$.
\begin{itemize}[itemsep=0em]
    \item \textbf{Frequency estimation.} The frequency of some key $k$, $\freqk$, is defined as the number of instances of the key $k$ in the set $\allkv$. That is, $\freqk(\allkv) = |\mathcal{A}_k|$.
    We denote the vector of frequencies of all keys by $\freq(\allkv)$.
    \item \textbf{Mean estimation.} The mean of the key $k$, $\meank$, is the mean of all values in $\allkv$ whose key is $k$. Thus, the mean is
    \[
    \meank(\allkv) = \frac{\sum_{\mathcal{A}_k} v_{ij}}{\freqk(\allkv)}.
    \]
    We denote the vector of means for all keys by $\mean(\allkv)$.
\end{itemize}

\subsection{Assumptions}
We aim to compute differentially private statistics over the data. The output will satisfy the requirements of differential privacy, as described in \Cref{sec.dp}.
In the case of telemetry data, frequency and mean very helpful statistics. For example, a browser vendor may want to calculate how many users have added known trackers to a blocklist~\cite{mozilla}.
We make the following assumptions about the data in the set $\allkv$.

\begin{itemize}
    \item \textbf{Bounded values.} We assume that the domain of the values is bounded by some interval of size $2R$. That is, every value $v_{ij} \in [x-R,x+R]$, for some $x$. We assume this to bound the sensitivity of statistics on the values.
    \item \textbf{Distinct keys.} We assume that each key appears in $\allkv_i$ at most once. This is done for simplicity because in the case where a client has multiple values for a single key, they could simply aggregate their data for said key beforehand.
    \item \textbf{Maximum size of $\allkv_i$.} We assume that each user has at most $\lambda$ key-value pairs. That is, $|\allkv_i| \leq \lambda$. This is assumed to tighten the analysis but is not restrictive because in the worst case $\lambda= \numkeys$.
    \item \textbf{Minimum frequency.} We assume that the frequency of each key is at least $\gamma$, for some $\gamma \geq 1$. This is assumed to bound the sensitivity of our mean calculation. We require this term to obtain a data-independent bound on the sensitivity since both the frequency and mean are sensitive in our setting.
\end{itemize}

We consider user-level differential privacy where neighbouring datasets change by adding or removing a single user's key-value pairs. \change{Here, the users correspond to the clients which we discuss throughout the paper.} We apply concepts from secure multi-party computation to remove the need for a single trusted server. In particular, the clients distribute their data between a small set of untrusted servers, which collectively compute statistics over the data. We call these servers the computation nodes, and we require that there are $\ell \geq 3$ computation nodes. We make the following assumptions about the entities involved in the computation.

\begin{itemize}
    \item \change{\textbf{Anonymous channels.} We assume that there exist anonymous channels between clients and computation nodes. \Cref{sec.dumgen} discusses how our protocol would function in the shuffle model, where there exists an intermediary third party between the clients and the computation nodes \cite{frist_shuffle,google_shuffle}. Our assumption of anonymous channel is a relaxation of the shuffle model assumption.}

    \item \textbf{Trust model.}
    For simplicity, we begin by describing our protocol in the semi-honest model. \change{In particular, we assume that the clients and the computation nodes follow the protocol. Any party may however attempt to infer other information from the data that they observe (semi-honest model).} \Cref{sec.sec} explains how to account for malicious clients and malicious computation nodes who do not follow the protocol. 
    
    \item \textbf{Secure channels.}
    We assume pairwise secure and authenticated communication channels between the computation nodes. We also assume secure, anonymous communication channels between the clients and computation nodes, in the absence of a shuffler.
\end{itemize}

The final computed statistics may be published to all parties. Given this output and the information within their view throughout the protocol, the computation nodes should only be able to infer information that is bounded by our differential privacy guarantees. In terms of formally proving security, we aim to satisfy the definition of indistinguishable computationally differential privacy (IND-CDP) by Mironov et al.~\cite{mironov09comp}. We discuss the security model in more detail in \Cref{sec.sec}.

\subsection{System Architecture}
As described in the previous subsection, we have a set of clients and a set of computation nodes. First, the set of clients will send their data in some way to the computation nodes. This is called the \emph{data collection phase}. Then, in a \emph{computation phase}, the computation nodes will compute some statistic and publish the output. In our solutions, we introduce an optional dummy generator party who acts similarly to clients and we may assume the existence of an anonymous channel of communication between clients and computation nodes. We discuss alternatives to both of these assumptions in \Cref{sec.dumgen}. The overall flow of data in our system can be seen in \Cref{fig.flow}.

\begin{figure}[ht]
\centering
\begin{tikzcd}[cells={nodes={draw=black,rounded corners,anchor=center,minimum height=1em},column sep =0.1em, row sep=scriptsize}]
1 \arrow[r, dashed]  & |[draw=none]|{} \arrow[ddd,dash,
    start anchor={[yshift=4ex]},
    end anchor={[yshift=-6em]}]   \\[-15pt]
2 \arrow[r, dashed] &  |[draw=none]|{} \arrow[r, dashed] & 1 \arrow[ddr, end anchor=north west]\\[-15pt]
|[draw=none]|\vdots & |[draw=none]| \arrow[r, dashed] & 2 \arrow[dr] \\[-15pt]
m \arrow[r, dashed] & |[draw=none]|{} & |[draw=none]|\vdots & |[yshift=1em,xshift=-6em]|\text{Output} \\[-15pt]
|[draw=none]|\text{Clients} & |[draw=none]|{} \arrow[r, dashed] & \ell \arrow[ur]  \\[-15pt]
\text{DG} \arrow[r, dashed] &|[draw=none]|{} &  |[draw=none]|\text{\begin{tabular}{c}Computation \\ Nodes\end{tabular}} \\[-15pt]
|[draw=none]|\text{\begin{tabular}{c}Dummy \\ Generator\end{tabular}} & |[draw=none]|\text{\begin{tabular}{c}Anonymous \\ Channel\end{tabular}}
\end{tikzcd}
\caption{Data flow}
\label{fig.flow}
\end{figure}

This architecture is similar to systems currently in use that involve a large set of clients sending data to multiple servers, such as Prio~\cite{prio,letsencrypt-prio}. The use of multiple servers for computation prevents the aggregation of private data at one central authority. Recall the example use-case of private telemetry data collection. Clients in this setting represent the web browsers that send data to the servers responsible for the computation. For example, these servers may be controlled by browser vendors and/or certificate authorities~\cite{letsencrypt}.

\section{Preliminary Solutions}\label{sec.early}
In this section, we provide a high-level overview of some preliminary solutions that fit within the architecture and trust assumption. This ultimately helps us demonstrate how we came to our final solution.

We can distinguish between solutions which conceal keys and those which do not.
Solutions that hide the keys incur a high computational cost to index the records into an array. So, as preliminary solutions, we only consider solutions that do not conceal keys. 
We begin with a naive solution that minimizes trust but requires high communication costs. The solution involves two types of participants: clients who own the data and computation nodes who calculate the statistic. We assume that there exists secure, anonymous channels between the clients and the computation nodes. We call this the \emph{full padding} solution, and it is executed in the following two steps.
\begin{description}\itemsep0em
    \item[Step 1: Padding, Share Generation, and Encryption.] Clients create a \emph{dummy} key-value pair for each key not in their set, $\allkv_i$. For each key-value pair, the client applies a secret sharing scheme to generate $\compnodes$ shares of the value. The clients forward the key-share pairs to the corresponding computation nodes.
    \item[Step 2: Computation.] Upon receiving the data, the computation nodes perform an MPC protocol to compute the differentially private statistic for each key and reconstruct the final result.
\end{description}
Since the computation nodes only see shares of the values, they cannot distinguish between real key-value pairs and dummies. Also, since every client sends a key-value pair, real or fake, for each key, it is impossible for the computation nodes to determine the true frequency of a key.

The obvious disadvantage to this protocol is the communication cost. Each client has $\numkeys$ key-value pairs, for which they generate $\compnodes$ shares, resulting in a communication cost of $O(m \numkeys \compnodes)$. These costs are prohibitively expensive, so we consider methods of improving our approach. We remark that this cost is incurred to prevent the computation nodes from learning who has what data and the true frequency of the keys.

To improve upon this naive solution, we can leverage differentially private padding to mask the true frequency of the keys. This is accomplished by adding a semi-trusted dummy generator.
The protocol, which we refer to as the \emph{one-sided dummy} solution, requires the following three steps.

\begin{description}
\item[Step 1: Share Generation.] For each key-value pair they own, the client applies a secret sharing scheme to the value and generates $\compnodes$ shares. They forward this data to the corresponding computation nodes.
\item[Step 2: Padding.] The dummy generator generates ``enough'' dummy key-value pairs to satisfy differential privacy. They generate shares for these new values via the same secret sharing scheme used by the clients. The data is forwarded to the corresponding computation node.
\item[Step 3: Computation.] The computation nodes receive all of the data and then perform a multi-party computation protocol to compute the desired differentially private statistic. They collectively output the result.
\end{description}

This solution employs differential privacy in two ways. First, to bound the information learned as a result of adding fewer dummies, and second, to output a noisy statistic. In the former, we require a randomized algorithm to add dummies; however, only adding key-value pairs creates a one-sided distribution and thus has a chance of being blatantly non-private. 

There exists work that uses a similar approach of only adding fake records from one-sided distributions in different settings~\cite{mazloom18,djoin,shrinkwrap}. We take inspiration from Mazloom and Gordon \cite[Appendix~A]{mazloom18}, using a shifted and truncated two-sided geometric distribution to sample dummies and achieve approximate DP. In our setting, achieving a reasonable privacy parameter $\delta$ requires adding a large number of dummies. Although it improves on the full padding solution, there remains a high communication cost. We elaborate on the details of this protocol as well as the expected number of dummies and compare it to our final solution in \Cref{app:one_sided_dummy}.

Now that we have established the drawbacks of some preliminary solutions, we can develop a solution which corrects these weaknesses. In our final solution, we reduce the number of dummies and improve the privacy guarantee from approximate to pure differential privacy. Clearly, adding one-sided noise would not allow for pure differential privacy. Intuitively, when only \emph{adding} data and never \emph{removing} data, the output will reveal an upper bound on the true number of data points. 
Thus, we want to simulate the effects of removing without actually removing data, as that would compromise the accuracy. We accomplish this using a technique we call \emph{selective multi-party computation}. This technique involves forwarding each piece of data to a select subset of computation nodes. In doing so, we are able to simultaneously emulate subtractive noise and reduce the communication costs. We describe this in detail in the following section.

\section{Protocol Description}\label{sec.prot}
This section provides a detailed description of our new protocol designed to compute differentially private statistics over key-value data. Our protocol consists of two phases: data collection and multi-party computation. \change{The main contribution of our work is in the data collection phase. Here, rather than having clients secret share their data among all computation nodes, they only send their shares to a select subset of computation nodes. This atypical approach enables us to improve efficiency and compute differentially private outputs. The multi-party computation phase adapts previously known techniques to account for this change in the data collection phase.}

\subsection{Data Collection Phase}\label{sec.overview}
Recall that we have $m$ clients, each denoted by an index $i \in [m]$ and $\ell \geq 3$ computation nodes. We also require an entity for dummy generation, which we assume is semi-honest. In \Cref{sec.dumgen}, we discuss an alternative approach for dummy generation which does not require a single semi-honest entity. Our protocol aims to return a statistic over the multiset, $\allkv$, while revealing only differentially private information about any particular client's set of key-value pairs $\allkv_i$. In the example of collecting telemetry data, this phase corresponds to the step where usage data is sent from users' browsers to the computation servers.

The data flow is in this phase is similar to the one-sided dummy solution and is summarized in \Cref{fig.flow}.
\begin{enumerate}
    \item Let $\compsub \in \{2,\dots,\compnodes-1\}$ be a (publicly-known) system parameter chosen ahead of time. For each key-value pair belonging to each client $i \in [m]$, the client $i$ chooses a subset of computation nodes of size $\compsub$, uniformly at random, without replacement. The client generates $\compsub$ shares of their value, one for each computation node they selected, using the additive secret sharing scheme. The client also generates $\compsub$ shares of a flag with value 1. This process is summarized in \Cref{fig.client}. The client forwards each tuple consisting of the key, a share of the flag, and a share of the value, to the corresponding computation node. 

    \begin{figure}[h]
    \centering
    \adjustbox{scale=0.8,center}{%
    \begin{tikzcd}[row sep = tiny, cells={nodes={draw=black, rounded corners,anchor=west,minimum height=1em}},
  			]
    		& \text{key, value\_share\textsubscript{1}, flag\_share\textsubscript{1}} \\
    		& \text{key, value\_share\textsubscript{2}, flag\_share\textsubscript{2}} \\
    |[yshift=2em]| \text{key, value, flag} \urar[end anchor = west] \arrow[end anchor = west]{uur}  \drar[end anchor = west] 	
    		& |[yshift=1em,draw=none, xshift=6em]| \vdots  
    		\\
    		& \text{key, value\_share\textsubscript{t}, flag\_share\textsubscript{t}} \\
    		& |[draw=none, xshift=5em]|\text{Share} 
    \end{tikzcd}}
    \caption{Client process for a single key-value pair}
    \label{fig.client}
    \end{figure}

    \item Meanwhile, for each key $k$ in the set $\keyset$, the dummy generator samples $\geonoise$ from a geometric distribution with parameter $\geoparam$. The two-sided geometric distribution is commonly used as the discrete version of the Laplace mechanism~\cite{ghosh2012universally, Balcer_Vadhan_2019}. We use the standard one-sided geometric distribution here to avoid having to shift and truncate the distribution. The dummy generator generates $\geonoise$ \emph{dummy} key-value pairs with key $k$, flag 0, and value 0. They repeat the process performed by the clients by choosing a subset of size $\compsub$ of computation nodes uniformly at random and generating $\compsub$ shares of each dummy value and flag. The dummy generator forwards the dummy data to their respective computation nodes. 
    \item Upon receiving the anonymized data, the computation nodes perform a multi-party computation protocol, based on which statistic is desired, and output the result.
\end{enumerate}
The main difference between this data collection phase and a typical data collection phase in MPC is in step (1) where clients only send their data to a subset of computation nodes. Although conceptually simple, this effectively acts as subtractive noise which will complement the additive noise from the dummy key-value pairs in step (2), thereby ensuring pure differential privacy.

\subsection{Multi-party Computation Phase}\label{sec.ssmpc}
In this section, we describe how the computation nodes process the data once it is received. To compute statistics on the data, they must be able to perform basic operations, such as multiplication and division, and add noise to perturb the final result. A protocol by Wu et al.~\cite{wu16} allows for secure multi-party addition of Laplace noise. We refer readers to the original paper for details of the protocol and proofs of security. Specifically, the protocols allow us to obtain shares of a random variate drawn from a univariate cumulative distribution function.
Additionally, we can use a protocol from Catrina and Saxena to perform division \cite[\S3.4]{catrina2010}. In the following descriptions, we let $[x]$ denote that the value of $x$ is distributed among the parties. We remark that the secret sharing scheme used in the multi-party computation step to generate noise need not be the same additive secret sharing scheme used during data collection; although, a change in schemes may require a \emph{resharing} step to update the shares accordingly.
Finally, we note that due to the selective way in which shares are distributed to computation nodes, the types of statistics which may be computed are restricted to those which can be derived from initially taking the sum of all shares. To compute statistics other than those which we describe below, one might consider encoding the data in various ways \cite{prio} to achieve compatibility with the selective step of the data collection phase.

\subsubsection{Frequency estimation} As per the data collection phase, each computation node receives a subset of shares of key-value pairs. The computation nodes see tuples which contain a key, a share of the flag, and a share of the value. To compute the frequency estimation, the computation nodes compute over the shares of the flag. Recall that real key-value pairs will have a flag with value 1 and dummies will have a flag with value 0 (and this value is not revealed to the computation node since it is secret shared). For each key in $\mathcal{K}$, each computation node executes Protocol \ref{prot.freq} with $\Delta = \lambda$, where $\lambda$ denotes the maximum number of distinct keys held by any client. The value of $\lambda$ is publicly known. Essentially, Protocol \ref{prot.freq} enables the computation nodes to compute the sum of all flags corresponding to some key $k$ and add noise to the result.

\begin{protocol}\caption{Frequency estimation}\label{prot.freq} 
\emph{Input.} A set of triples $\langle k_{ij}, f_{ij}, s_{ij} \rangle$. A target key $k$. A sensitivity $\Delta$. \\
\textbf{1:} Compute the sum $f_i = \sum_{k_{ij} = k} f_{ij}$. These are shares of the frequency $[\freqk]$. \\
\textbf{2:} Generate a random variate, $[\xi]$, drawn from Lap$(\Delta/\freqnoise)$ by invoking Protocol \cite[\S7]{wu16} with $C(t) = \frac{\freqnoise}{2\Delta}\int_{-\infty}^t\exp(-(\freqnoise|s|)/\Delta)ds$. \\
\textbf{3:} Return $[\freqk] + [\xi]$.
\end{protocol}

\paragraph{Correctness.} Before the noise addition, the $i$th computation node has a value $f_i$. The sum of all of these values is 
\[
\sum_{i=1}^\ell f_i = \sum_{i=1}^\ell \sum_{j, k_{ij} = k} f_{ij} = \sum_{k_{ij} = k} f_{ij}.
\]
The sum of all shares of the flags must be the sum of the flags themselves by the properties of the additive secret sharing scheme. As real data points have flag equal to 1 and dummies have flag 0, this must be the frequency of key $k$, $\freqk$. By the properties of Protocol \cite[\S7]{wu16}, the computation nodes ultimately obtain the frequency of the key $k$ with Laplace noise drawn from Lap$(\Delta/\freqnoise)$ where $\Delta=\lambda$.


\subsubsection{Mean estimation} 
To compute a noisy mean estimation, the computation nodes execute Protocol \ref{prot.mean} with $\Delta = \frac{\lambda 2R}{\gamma}$, where $\lambda$ is the maximum number of distinct keys held by any client, $\gamma$ is the minimum frequency of any key, and $2R$ is the size of the interval bounding the values. To summarize Protocol \ref{prot.mean}, the computation nodes compute the frequency of the key $k$, sum the values corresponding to key $k$, divide by the frequency to compute the mean, and then add noise. We let $\sumk$ denote the sum of all values corresponding to key $k$.

\begin{protocol}\caption{Mean estimation}\label{prot.mean}
\emph{Input.} A set of triples $\langle k_{ij}, f_{ij}, s_{ij} \rangle$. A target key $k$. A sensitivity $\Delta$. \\
\textbf{1:} Compute the sum $f_i = \sum_{k_{ij} = k} f_{ij}$. These are shares of the frequency $[\freqk]$.\\
\textbf{2:} Compute the sum $s_i = \sum_{k_{ij} = k} s_{ij}$. These are shares of the sum of values $[\sumk]$.\\
\textbf{3:} Invoke Protocol \cite[\S3.4]{catrina2010} with input $[\sumk],[\freqk]$ to obtain $[\sumk/\freqk]$. \\
\textbf{4:} Generate a random variate, $[\xi]$, drawn from Lap$(\Delta/\meannoise)$ by invoking Protocol \cite[\S7]{wu16} with $C(t) = \frac{\meannoise}{2\Delta}\int_{-\infty}^t\exp(-\frac{\meannoise|s|}{\Delta})ds$. \\
\textbf{5:} Return $[\sumk/\freqk] + [\xi]$.
\end{protocol}

\paragraph{Correctness.} Steps 1-2 accurately compute the frequency of the key $k$ and the sum of all the values corresponding to key $k$, by the properties of the additive secret sharing scheme. This is also due to the fact that the dummies have a value of 0, so this does not affect the sum. Step 3 computes the true mean. By the properties of Protocol \cite[\S7]{wu16}, the computation nodes obtain the mean of the values corresponding to key $k$ with Laplace noise drawn from Lap$(\Delta/\meannoise)$ where $\Delta=\frac{\lambda 2R}{\gamma}$.


\subsection{Implementation Considerations}\label{sec.dumgen}
\paragraph{Anonymous Channels.} In our protocol description, we assume that there exist anonymous channels between the clients and the computation nodes. Although this can be accomplished by using a variety of anonymity systems \cite{tor}, we discuss one approach here. We may rely on a semi-trusted party to act as an intermediary between the clients and the computation nodes. This is similar to the shuffle model in DP \cite{frist_shuffle}. In the shuffle model, an intermediary shuffler performs anonymization, shuffling, thresholding, and batching. The presence of the shuffler induces a privacy amplification which allows each client to add less randomness \cite{google_shuffle}. Our protocol does not require an equally strong trust assumption on the shuffler. Rather, we can have clients encrypt their data with the public keys of the computation node that is responsible for receiving each share of data. Then, the clients forward their data to the shuffler, which removes any user-specific metadata, before forwarding the shares to the respective computation nodes in some random order. This ensures that data is anonymized from the perspective of the computation nodes without revealing the actual data to the shuffler. The only information that the shuffler learns is how many key-value pairs each client holds. Ultimately, our use of this intermediary party requires less trust in the party than the trust required in the shuffle model. Given that the shuffle model is used in practice today by organizations such as Google \cite{google_shuffle}, it is reasonable to assume the existence of anonymous channels.
Note that collusion between the shuffler and the server is not permitted in the shuffle model. Similarly, collusion between the computation nodes and the nodes implementing the anonymous channel is not permitted in our protocol.

In addition to anonymizing data, we may choose the shuffler to be the semi-trusted party responsible for generating the dummy values. Thus, this intermediary party can fulfill two goals at once, while we retain minimal trust assumptions.


\paragraph{Dummy Generation.}
Another requirement in the protocol description is a semi-trusted party to generate dummies.
A single party that generates dummies poses a threat if that party colludes with the computation nodes. If the total number of dummies generated for key $k$ is known to a computation node, they can derive the expected number of dummies it received for said key. For $x_k$ dummies in total, one computation node receives $px_k$ dummies on average. An estimate of the true frequency of the key is derived by deducting $px_k$ from the total number of key-value pairs it receives.

To circumvent this issue, one alternative is using $d$ parties to generate the dummies. Specifically, for each key $k$, $d$ parties independently sample $x_k$ from a geometric distribution with parameter $r$ and generate $x_k$ dummies for key $k$. Note that collusion between dummy generators and clients does not affect the security of the protocol. Hence, for each key, dummy generators can be chosen from amongst the clients to eliminate the requirement for extra entities in the protocol. While the total number of dummies that is generated for each key is roughly $d$ times higher than required, this approach distributes trust and prevents a single point of failure. Moreover, our evaluation shows that the expected number of dummies with a single dummy generator is very small (less than 3 as shown in \Cref{fig:expt_dummies}). Hence, this approach adds a small overhead to the overall protocol.
In \Cref{sec.sec}, we address the case where dummy generators behave maliciously.

\change{\paragraph{Offline Noise Generation} The random noise generated in line 2 of \Cref{prot.freq} and line 4 of \Cref{prot.mean} depend only on the parameters of the protocol, not the data provided by the clients. Hence, these steps, which constitute the bulk of the runtime, can be performed offline, before the clients provide their data. This reduces the perceived latency from the client's perspective.}

\section{Privacy}\label{sec.priv}

\begin{figure}[t]
\centering
		\includegraphics[scale=0.5]{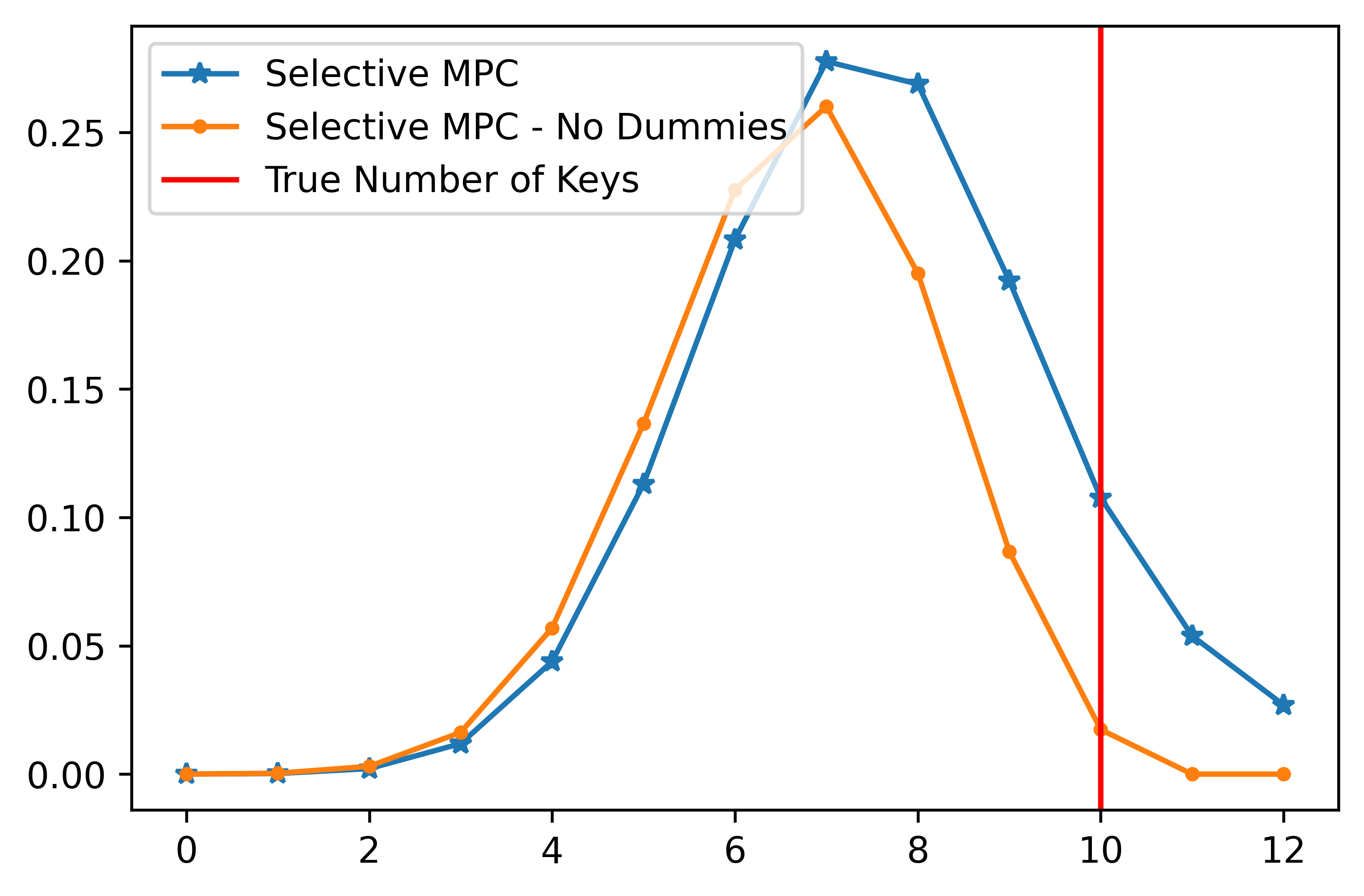}
	    \caption{\change{We visualize the distribution of a single computation node's view of 10 keys. We set $p=2/3$ and $r=0.4$ and plot the distribution with and without dummies.}}
		\label{fig:distribution_vis}
\end{figure}
In this section, we analyze the privacy of our protocols.
First, we formally define the privacy leakage as a function and show that it is $\leaknoise$-DP.
\change{We find that limiting the view of each computation node by randomly selecting which key-value pairs a node sees is equivalent to drawing noise from a binomial distribution. We visualize this distribution in Figure~\ref{fig:distribution_vis} as Selective MPC - No Dummies. In expectation, each node will see $\frac{\compsub}{\compnodes}$ of the keys (2/3 of 10 in the figure), the center of the binomial distribution. Despite being a smooth, two sided distribution, the binomial selection alone is not enough to satisfy pure DP. That is because simply distributing the key-value pairs, ensures it is impossible for a computation node to observe more than the true count of keys (we observe the probability of 11 or 12 keys is zero in Figure~\ref{fig:distribution_vis}). For this reason, in Selective MPC, we add a small number of dummy key-value pairs following a geometric distribution. We can see that this shifts the distribution to the right and avoids clipping the right tail in Figure~\ref{fig:distribution_vis}. We formalize our unique combination of additive noise from a geometric distribution and subtractive noise from a binomial distribution and prove that it satisfies pure DP.}
We show in \Cref{sec:output_privacy} that our frequency estimation satisfies $\freqnoise$-DP and that our mean estimation satisfies $\meannoise$-DP. As a result, the total $\epsilon$ spent for the frequency estimation protocol is $\freqnoise + \leaknoise$ by naive composition. Similarly, the privacy budget for the mean estimation is $\meannoise + \leaknoise$. We remark that both the mean and frequency computation can be executed simultaneously, leading to a total cost of $\freqnoise +\meannoise + \leaknoise$.

Let $\leakage$ represent the algorithm that distributes all the key-value pairs and dummies among the $\compnodes$ computation nodes. Let $\leakage_k$ denote the corresponding algorithm for a single key $k$. For the sake of the privacy analysis, we treat $\leakage$ as an ideal functionality. That is, we assume it is run by a trusted party and consider only the output of the algorithm in our analysis of privacy. An arbitrary output of $\leakage$ is an observation of a single computation node, which we denote by $\observation$. For simplicity, we assume that none of the nodes are colluding with each other and address the case of collusion separately. Recall that upon receiving data, the computation nodes obtain a set of tuples. The shares reveal nothing about the corresponding values and the computation nodes do not know the origin of each piece of data. Therefore, $\observation$ can be described as a histogram of keys observed by the computation node. We wish to show that the output of this algorithm is bounded by differential privacy. The input to $\leakage$ is the list of all key-value pairs held by all users. We say that two inputs, $\allkv$ and $\allkv'$, are \emph{neighbouring} if they differ in at most the addition or deletion of $\distinctkeys$ entries (user-level privacy).

We represent the ground truth of this algorithm as the vector $\freq(\allkv)$, i.e., the true histogram of the keys. Let $\freqk(\allkv)$ represent the count of key $k$ on the input $\allkv$. We assume that each user has at most one value for each key and thus $\forall k \in [\numkeys]$, $|\freqk(\allkv) - \freqk(\allkv')|\leq 1$. The ideal functionality of algorithm $\leakage$ can be described as follows.
\begin{enumerate}
    \item Compute the noisy histogram $Y$ for a given input $\allkv$ by computing $Y_k= \freqk(\allkv) + \eta$ where $\eta\in \mathbb{N}$ is sampled i.i.d from a geometric distribution with parameter $\geoparam$.
    \item Then $Z$ is obtained by sampling from the histogram $Y$, following our protocol. This selective part of our procedure is equivalent to sampling $\|Y\|_1$ times from a binomial distribution with parameter $\subsetratio=\compsub/ \compnodes$.
\end{enumerate}

We denote the probability mass function (PMF) of the binomial distribution\footnote{We follow the standard convention that $\mathbb{B}(z; a,\subsetratio)=0$ whenever $a<z$.} with parameters $a$ and $p$ as
\begin{equation}\label{eqn:pdf_bino}
    \mathbb{B}(z; a,\subsetratio)=\binom{a}{z}p^z(1-p)^{a-z},
\end{equation}
and the PMF of the geometric distribution as
\begin{equation}\label{eqn:pdf_geo}
    \mathbb{G}(z; r)=(1-r)^z r.
\end{equation}
Then, the PMF of $\leakage_k(\allkv)$ is defined as
\begin{equation}
   \Pr[\leakage_k(\allkv)=\observation_k] = \sum_{v=0}^\infty \mathbb{G}(v; \geoparam) \mathbb{B}(\observation_k; \freqk(\allkv)+v,\subsetratio).
\end{equation}
Finally, the PMF of $\leakage(\allkv)$ is 
\begin{equation}
    \Pr[\leakage(\allkv)=\observation]=\prod_{k=0}^\numkeys \Pr[\leakage_k(\allkv)=\observation_k],
\end{equation}
where this equality follows from the fact that $\leakage$ applies randomness independently for each key. Using this PMF we can prove the following theorem.

\begin{restatable}{theorem}{privacythm}\label{thm.priv}
The algorithm $\leakage$ satisfies $\leaknoise$-DP with
\begin{equation}\label{eqn:epsilon_statement}
    \leaknoise = \distinctkeys\ln{\left( \max\left\{ \frac{1}{1-\geoparam},\frac{1}{1-\subsetratio}+1-\geoparam \right\}\right)}.
\end{equation}
\end{restatable}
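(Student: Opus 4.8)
The plan is to prove a pointwise likelihood-ratio bound and exploit that $\leakage$ injects independent randomness per key. Since two neighbouring inputs $\allkv,\allkv'$ differ by one user's data, and a user holds each key at most once, the true histograms satisfy $|\freqk(\allkv)-\freqk(\allkv')|\le 1$ for every $k$ and $\freqk(\allkv)=\freqk(\allkv')$ for all but at most $\distinctkeys$ keys. Writing $\Pr[\leakage(\allkv)=\observation]=\prod_k\Pr[\leakage_k(\allkv)=\observation_k]$, the ratio $\Pr[\leakage(\allkv)=\observation]/\Pr[\leakage(\allkv')=\observation]$ is a product of per-key ratios, at most $\distinctkeys$ of which differ from $1$. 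So it suffices to bound, for a single key, the ratio between the output law with count $n$ and the output law with count $n+1$ (in both directions) by $M:=\max\{\frac{1}{1-\geoparam},\frac{1}{1-\subsetratio}+1-\geoparam\}$; multiplying $\distinctkeys$ copies then yields $\leaknoise=\distinctkeys\ln M$, and summing the pointwise bound over any event gives the definition of $\leaknoise$-DP (the swapped ordering of the neighbouring pair is covered because the bound is two-sided).

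Abbreviate $f_n(z):=\Pr[\leakage_k=z\mid \freqk=n]=\sum_{v\ge 0}\mathbb{G}(v;\geoparam)\,\mathbb{B}(z;n+v,\subsetratio)$, and note $f_n(z)>0$ for all $n,z\ge 0$ since some term with $n+v\ge z$ is strictly positive. The key identity, which is where the geometric distribution earns its place, comes from peeling off the $v=0$ summand and re-indexing the rest ($v\mapsto v-1$), using $\mathbb{G}(v;\geoparam)=(1-\geoparam)^v\geoparam$:
\[
f_n(z)=\geoparam\,\mathbb{B}(z;n,\subsetratio)+(1-\geoparam)\,f_{n+1}(z).
\]
In words, adding one real record splits the mechanism into a plain $\mathrm{Binomial}(n,\subsetratio)$ law (the event of zero dummies) mixed with a verbatim copy of the mechanism at the incremented count (the event of at least one dummy). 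Everything follows from this recursion.

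From the recursion, one direction is immediate: $\frac{f_{n+1}(z)}{f_n(z)}=\frac{f_{n+1}(z)}{\geoparam\,\mathbb{B}(z;n,\subsetratio)+(1-\geoparam)f_{n+1}(z)}\le\frac{1}{1-\geoparam}$ because $\mathbb{B}(z;n,\subsetratio)\ge 0$. For the other direction, rewrite $\frac{f_n(z)}{f_{n+1}(z)}=\geoparam\,\frac{\mathbb{B}(z;n,\subsetratio)}{f_{n+1}(z)}+(1-\geoparam)$; if $z>n$ then $\mathbb{B}(z;n,\subsetratio)=0$ by the zero convention and the bound is trivial, while if $z\le n$ we lower-bound $f_{n+1}(z)\ge\geoparam\,\mathbb{B}(z;n+1,\subsetratio)$ by keeping only the $v=0$ term, and then the elementary binomial-thinning ratio $\frac{\mathbb{B}(z;n,\subsetratio)}{\mathbb{B}(z;n+1,\subsetratio)}=\frac{n+1-z}{(n+1)(1-\subsetratio)}\le\frac{1}{1-\subsetratio}$ gives $\frac{f_n(z)}{f_{n+1}(z)}\le\frac{1}{1-\subsetratio}+1-\geoparam$. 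Taking the maximum of the two bounds establishes the per-key, unit-sensitivity claim, and composing over the at most $\distinctkeys$ affected keys completes the proof.

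The step I expect to be the crux is discovering the mixture recursion: bounding ratios of the infinite convolution sums directly is unpleasant, whereas the recursion reduces everything to one trivial inequality and one first-principles binomial computation. A secondary point needing care is the genuine asymmetry between the two sides — the $\frac{1}{1-\geoparam}$ bound is essentially free (``one more record can only add geometric mass''), while the $\frac{1}{1-\subsetratio}+1-\geoparam$ bound requires both the thinning-ratio estimate and the separate treatment of the degenerate range $z>n$ — and it is precisely this asymmetry that forces the $\max$ in the stated $\leaknoise$.
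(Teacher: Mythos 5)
Your proof is correct and follows essentially the same route as the paper's: the per-key reduction raised to the power $\distinctkeys$, the bound $\frac{1}{1-\geoparam}$ obtained by re-indexing the geometric sum, and the bound $\frac{1}{1-\subsetratio}+1-\geoparam$ obtained by peeling off the $v=0$ term and computing the binomial thinning ratio $\frac{n+1-z}{(n+1)(1-\subsetratio)}\le\frac{1}{1-\subsetratio}$ are all exactly the paper's steps. Your mixture recursion $f_n(z)=\geoparam\,\mathbb{B}(z;n,\subsetratio)+(1-\geoparam)f_{n+1}(z)$ is simply a cleaner packaging of the paper's change-of-variables manipulations (and your explicit handling of positivity and of the boundary case $z=n$ is slightly tidier), but the underlying argument is the same.
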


We defer the proof of \Cref{thm.priv} to \Cref{sec:privacy}. The proof relies on the assumption that computation nodes only see a subset of the keys. This simulates adding noise from a binomial distribution. A natural question to ask is how the privacy guarantee degrades under collusion. In the case where computation nodes collude, the privacy degrades. 
\begin{claim}
  Let $c$ be the collusion threshold which is the maximum number computation nodes that can collude (Theorem~\ref{thm.priv} assumes $c=1$). Then (\ref{eqn:epsilon_statement}) becomes
 \begin{equation}\label{eqn:epsilon_collusion}
     \leaknoise = \distinctkeys\ln{\left( \max\left\{ \frac{1}{1-\geoparam},\frac{\binom{\compnodes}{c+1}}{\binom{\compnodes -c}{c+1}}+1-\geoparam \right\}\right)}.
 \end{equation}
\end{claim}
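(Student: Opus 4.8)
The plan is to reduce the claim to the proof of \Cref{thm.priv}: a coalition of $c$ semi-honest computation nodes that pools its transcripts sees an object of exactly the same form as a single node's view $\observation$, only with the binomial sampling parameter $\subsetratio$ replaced by a larger, $c$-dependent parameter $\subsetratio_c$. The first step is to argue that the coalition still learns nothing beyond a histogram of observed keys. The keys travel in the clear, so the only extra information over a single node is \emph{which} of the $\compsub$ nodes chosen by the data owner for a given key-value pair landed inside the coalition. Because the coalition has size $c<\compsub$, it never holds all $\compsub$ additive shares of any value or flag; hence, by a standard simulation argument for additive secret sharing, the shares in its joint view are independent of the underlying values and flags given the selection pattern and can be simulated from that pattern alone. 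Consequently the coalition's view is again a key-histogram $\observation$, and for a fixed key $k$ the coalition observes a particular key-value pair if and only if at least one of its $c$ members is among the size-$\compsub$ subset that the data owner picked uniformly at random for that pair.

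The second step computes this per-pair observation probability. Since each client, and likewise the dummy generator, samples its size-$\compsub$ subset independently for every key-value pair, the per-pair observation indicators are i.i.d.\ Bernoulli across pairs with parameter
\[
\subsetratio_c \;=\; 1-\frac{\binom{\compnodes-c}{\compsub}}{\binom{\compnodes}{\compsub}}.
\]
For $c=1$ this is exactly $\compsub/\compnodes=\subsetratio$, so it specializes to \Cref{thm.priv}. Using the subset-complement symmetry $\binom{\compnodes-c}{\compsub}\big/\binom{\compnodes}{\compsub}=\binom{\compnodes-\compsub}{c}\big/\binom{\compnodes}{c}$ together with the configuration $\compsub=c+1$ (the minimal subset size for which no coalition of $c$ nodes can reconstruct any value) gives $\tfrac{1}{1-\subsetratio_c}=\binom{\compnodes}{c+1}\big/\binom{\compnodes-c}{c+1}$, which is exactly the coefficient appearing in \eqref{eqn:epsilon_collusion}; in particular $\subsetratio_c\ge\subsetratio$ and $\subsetratio_c$ is nondecreasing in $c$. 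By the same reasoning used to build $\observation$ from the noisy histogram $Y$ in the ideal functionality of $\leakage$, the coalition's view of key $k$ is then distributed as $\mathbb{B}\!\left(\,\cdot\,;\ \freqk(\allkv)+\eta,\ \subsetratio_c\right)$ with $\eta\sim\mathbb{G}(\,\cdot\,;\geoparam)$, i.e.\ precisely the distribution analyzed in the proof of \Cref{thm.priv} with $\subsetratio$ replaced by $\subsetratio_c$.

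The third step is to re-run the proof of \Cref{thm.priv} verbatim under this substitution. That proof bounds, for each key $k$, the two-sided ratio of the PMF of $\mathbb{B}(\cdot;\freqk+\eta,\subsetratio)$ under a unit change in $\freqk$ by $\max\{1/(1-\geoparam),\ 1/(1-\subsetratio)+1-\geoparam\}$, and then composes this bound over the at most $\distinctkeys$ keys that can differ between user-neighbouring inputs (each by at most one), yielding the leading factor $\distinctkeys$. The first term of the maximum does not involve the sampling parameter and the second is monotone in it, so substituting $\subsetratio_c$ produces exactly \eqref{eqn:epsilon_collusion}; monotonicity of $\subsetratio_c$ in $c$ also makes $\leaknoise$ nondecreasing in $c$, consistent with privacy degrading under collusion.

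The main obstacle is the first step rather than the arithmetic: one must carefully justify that $c$ colluding nodes learn nothing about the data beyond the observed key-histogram with sampling law $\subsetratio_c$ — this is exactly where the $c<\compsub$ security of additive secret sharing enters, via a simulator that produces the coalition's value/flag shares from the selection pattern alone — and one must check that the worst-case neighbouring analysis inside the proof of \Cref{thm.priv} depends on the sampling distribution only through the single parameter now set to $\subsetratio_c$, so that the bound transfers without reworking the case analysis. The remaining identity relating $\subsetratio_c$ to the binomial-coefficient ratio in \eqref{eqn:epsilon_collusion} is routine.
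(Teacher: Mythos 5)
Your proposal is correct and follows essentially the same route as the paper: replace the single-node sampling parameter $\subsetratio$ by the coalition's observation probability $1-\binom{\compnodes-c}{\compsub}/\binom{\compnodes}{\compsub}$, set $\compsub=c+1$ so that $\tfrac{1}{1-\subsetratio_c}=\binom{\compnodes}{c+1}/\binom{\compnodes-c}{c+1}$, and re-run the proof of Theorem~\ref{thm.priv} with this substitution. Your additional care in reducing the coalition's joint view to a key-histogram via simulatability of the (fewer than $\compsub$) additive shares is a welcome elaboration of a step the paper leaves implicit, but it is not a different argument.
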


We note that we consider the minimal value of $\compsub = c + 1$. This is because values of $\compsub$ less than $c+1$ may allow colluding computation nodes to reconstruct the values.

The result in Equation (\ref{eqn:epsilon_collusion}) follows from a similar argument to the proof of \Cref{thm.priv}, substituting $\subsetratio$ with the maximum probability that a node observes a specific share in the presence of collusion. The probability of a computation node observing a single share, either directly or through the nodes it colludes with, is $1 - \binom{\compnodes -c}{c+1}/\binom{\compnodes}{c+1}$.
 In Figure~\ref{fig:collusion} we show how collusion impacts the privacy budget compared to the baseline given by Theorem~\ref{thm.priv}. In the baseline, represented by the dotted line, we set $\compsub=2$, $\compnodes=20$, and choose $\geoparam$ such that it minimizes $\leaknoise$. We plot Equation ($\ref{eqn:epsilon_collusion}$) as a function of $c$ for $\compnodes=20$ and choose $\geoparam$ such that it minimizes $\leaknoise$.
 
\begin{figure}[t]
	\centering
	   \begin{tikzpicture}[]
	   \begin{axis}[xlabel={Collusion threshold $c$},ylabel={$\min\limits_r \epsilon_L$}, ylabel near ticks,width=\columnwidth,height=5cm, legend style={at={(0.5,0.9)}}]
    	    \addplot table [x=c, y=collusion_eps, col sep=comma] {figures/collusion_new.csv};
            \addplot[domain=0.5:8.5, dotted, thick, variable=x] (x, {0.5303});
	   \end{axis}
	   \end{tikzpicture}
	    \caption{$\leaknoise$ for varying levels of collusion.}\label{fig:collusion}
\end{figure}

\change{\paragraph{Outputting Exact Answers.} 
We note that our protocol could be used to output the exact frequency or mean. We would simply follow the selective MPC protocol without adding the final noise (the dummies have no effect on the accuracy). In this case, $\leaknoise$ is the only privacy cost incurred. This is similar to output constrained DP~\cite{he17comp} where the computation nodes do not learn more than a DP amount about the clients data, excluding what they learn from the final result. In general, publishing exact statistics should be avoided as it allows for reconstruction attacks.}


\section{Security}\label{sec.sec}
We start by considering the semi-honest model and later extend to the malicious model. Our protocol consists of two main parts: the data collection and the multi-party computation protocols. For simplicity of the implementation, we are working under the assumption that there exist anonymous, secure channels between the clients and the computation nodes. The clients only see their own data and the final output of the differentially private computation. Our goal is that an adversary learns no more than the differentially private leakage from algorithm $\leakage$ and the final output of the differentially private computation. 

\subsection{Proof of security}
To analyze the security of the protocol, we rely on a definition of indistinguishable computationally differential privacy (IND-CDP) by Mironov et al.~\cite{mironov09comp}. The following definition is for two-party computation. The security definitions which we present encompass the entire protocol, i.e. both the data collection and computation phases.

\begin{definition}[IND-CDP-2PC \cite{he17comp}]
A two-party protocol $\Pi$ for computing function $f$ satisfies $(\epsilon_A(\lambda),\epsilon_B(\lambda)$-indistinguishable computationally differential privacy (IND-CDP-2PC) if $\textsc{VIEW}_A^{\Pi}(D_A,\cdot)$ satisfies $\epsilon_B(\lambda)$-IND-CDP, i.e., for any probabilistic polynomial time (in $\lambda$) adversary $\mathcal{A}$, for any neighbouring datasets $(D_B,D_B')$,
\begin{align*}
&\Pr[\mathcal{A}(\textsc{view}_A^{\Pi}(D_A,D_B))=1] \\
\leq &\exp(\epsilon_B) \cdot \Pr[\mathcal{A}(\textsc{view}_A^{\Pi}(D_A,D_B'))=1] + negl(\lambda).
\end{align*}
Likewise for B's view of any neighbours $(D_A,D_A')$ and $\epsilon_A$.
\end{definition}

For our purposes, we extend the definition from two-party to multi-party computation and we let $\epsilon = \epsilon_A = \epsilon_B$ for convenience. Intuitively, the following definition ensures that an adversary $\mathcal{A}$ cannot distinguish neighbouring databases $D, D'$ from their view, and the same applies to every other party. 

\begin{definition}[IND-CDP-MPC]
A multi-party protocol $\Pi$ for computing function $f$ satisfies $\epsilon(\lambda)$- indistinguishable computationally differential privacy (IND-CDP-MPC) if for every probabilistic polynomial time (in $\lambda$) adversary $\mathcal{A}$ with input $D_A$, and for neighbouring datasets $D,D'$ belonging to the honest parties (i.e. $D,D' = \cup_{i \setminus A} D_i$),
\begin{align*}
&\Pr[\mathcal{A}(\textsc{view}_A^{\Pi}(D_A, D))=1] \\
\leq &\exp(\epsilon) \cdot \Pr[\mathcal{A}(\textsc{view}_A^{\Pi}(D_A, D'))=1] + negl(\lambda).
\end{align*}
Likewise for every other party's view of neighbours $(D,D')$ and $\epsilon$.
\end{definition}


Let $\Pi$ denote the complete protocol, from the clients sending data to the computation nodes and the computation nodes performing the multi-party computation protocol to compute a function $f$ over the data. Our security property will apply to multi-party computation protocols which satisfy \emph{statistical privacy}. We say that the MPC protocol computing $f$ ensures \emph{statistical privacy} if the views of the computation nodes can be simulated by an oracle such that the true and simulated distributions are statistically indistinguishable. Both our protocol computing the frequency estimation and the protocol computing the mean estimation satisfy statistical privacy. Wu et al.~\cite{wu16} prove that their noise generation protocol satisfies statistical privacy and Catrina and Saxena \cite{catrina2010} prove that their division protocol satisfies statistical privacy. The remaining steps in our protocols are secure in an information-theoretic sense. Thus, the following security proof can be applied to both our frequency and mean estimation protocols. 

In the definition that we aim to satisfy, the adversary controls some set of parties, denoted by $A$. The adversary we consider may control at most $b_{N}$ computation nodes, where $b_{N}$ will be defined depending on the scenario. Then, their view consists of the combined view of the $b_{N}$ computation nodes which they control. Note also that an adversary may control up to $b_{C} \leq m-1$ clients (all but one). Since a client does not perform any computation and simply submits data, their view is trivial to simulate. Further, we assume for the time being that the protocol is using a trusted third party to generate dummies, which cannot be controlled by the adversary. In order to determine $b_{N}$, the bound on the number of computation nodes that the adversary may control, we must consider how the MPC protocol functions. In our case, we use a $(t,t)$-threshold scheme, so we can let $b_{N} = t-1$. Then, the neighbouring datasets, $D$ and $D'$, correspond to the data sent by the honest clients which are outside of the control of the adversary. 
Finally, we state our security theorem and defer the proof to \Cref{sec:proof-security}.

\begin{restatable}{theorem}{securitythm}
\label{thm:security}
If $f$ satisfies $\epsilon_f$-differential privacy and the corresponding MPC protocol satisfies statistical privacy, then our protocol $\Pi$ ensures IND-CDP-MPC in the presence of a semi-honest adversary who controls at most $b_{N}$ computation nodes and at most $b_{C}$ clients.
\end{restatable}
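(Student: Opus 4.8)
The plan is to prove Theorem~\ref{thm:security} via a standard simulation argument in the combined DP/MPC model, reducing the IND-CDP-MPC guarantee to two ingredients that are already available: the statistical privacy of the MPC sub-protocols, and the $\epsilon_L$-DP guarantee of the ideal leakage functionality $\leakage$ from \Cref{thm.priv}. Concretely, I would first describe a simulator $\mathsf{Sim}$ for the view of the adversary $\mathcal{A}$, who controls a coalition of at most $b_N = t-1$ computation nodes and at most $b_C \le m-1$ clients. The clients' views are trivial (they only submit data and see the public output), so the crux is simulating the view of the $t-1$ corrupted computation nodes. That view consists of (i) the tuples each corrupted node receives during data collection --- a key, a share of a flag, and a share of a value --- and (ii) the transcript of the MPC phase computing $f$. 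For (i), because the data-collection step uses a $(t,t)$-additive secret sharing scheme, any set of $t-1$ shares of each value and flag is distributed uniformly and independently of the underlying secret, so the simulator can sample those shares from scratch; the only non-trivial content is the \emph{multiset of keys} each corrupted node sees, which is exactly an output of $\leakage$ restricted to the corrupted coalition. The simulator obtains this from the ideal leakage oracle. For (ii), statistical privacy of the Wu et al.~noise-generation protocol and the Catrina--Saxena division protocol gives a simulator for the MPC transcript that takes as input only the public output $f(\allkv)$ (plus the already-simulated shares), and produces a transcript statistically close to the real one.

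The argument then proceeds in the order: (1) compose these two simulators to get $\mathsf{Sim}(D_A, \leakage(\cdot), f(\cdot))$; (2) argue by a hybrid that the real view $\textsc{view}_A^\Pi(D_A, D)$ and $\mathsf{Sim}$'s output on $(\leakage(D \cup D_A \cup \text{dummies}), f(\cdot))$ are computationally (indeed statistically, for the share/MPC parts) indistinguishable --- one hybrid step swapping real shares for random ones via the secret-sharing property, one swapping the real MPC transcript for the simulated one via statistical privacy, and one observing that the key multiset is literally sampled by $\leakage$; (3) invoke \Cref{thm.priv}: since $\leakage$ is $\epsilon_L$-DP and $f$ is $\epsilon_f$-DP, the pair $(\leakage(\cdot), f(\cdot))$ is $(\epsilon_L + \epsilon_f)$-DP on neighbouring honest inputs $D, D'$ by basic composition; (4) conclude that for any PPT $\mathcal{A}$,
\begin{align*}
\Pr[\mathcal{A}(\textsc{view}_A^\Pi(D_A, D)) = 1]
&\le \Pr[\mathcal{A}(\mathsf{Sim}(D_A, \leakage(D), f(D))) = 1] + negl(\lambda)\\
&\le e^{\epsilon_L + \epsilon_f} \Pr[\mathcal{A}(\mathsf{Sim}(D_A, \leakage(D'), f(D'))) = 1] + negl(\lambda)\\
&\le e^{\epsilon_L + \epsilon_f} \Pr[\mathcal{A}(\textsc{view}_A^\Pi(D_A, D')) = 1] + negl(\lambda),
\end{align*}
which is exactly IND-CDP-MPC with $\epsilon = \epsilon_L + \epsilon_f$ (the theorem statement writes $\epsilon_f$; the $\epsilon_L$ term is the leakage contribution accounted for in \Cref{sec.priv}). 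The symmetric statement for every other party's view follows by the same reasoning, since every party is either a client (trivial view) or a computation node (covered above).

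I would be careful about a few points. First, the simulator must be given the corrupted coalition's \emph{projection} of $\leakage$'s output, not the full $\leakage$ output; since \Cref{thm.priv} bounds the DP of a single node's view (and the collusion claim handles $c>1$), I need the coalition of $t-1 = b_N$ nodes to be covered --- this is consistent with the ``no collusion, $c=1$'' assumption stated for \Cref{thm.priv} together with the choice $t = c+1$, so for the theorem as stated $b_N = t - 1$ corrupted nodes see shares that do not reconstruct secrets, and their joint key-view is what $\leakage$ models. Second, I must ensure the dummies, generated by the trusted dummy-generator outside $\mathcal{A}$'s control, are folded into the input of $\leakage$ in the ideal functionality exactly as in \Cref{sec.priv}; the neighbouring relation is on honest client data only, so the dummy randomness is part of $\leakage$'s internal coins. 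Third, a subtle composition point: the MPC transcript depends on $f(\allkv)$, which itself is a DP release, so I should make sure the hybrid that introduces the simulated transcript is conditioned on the same $f$-output in real and ideal worlds --- this is handled by passing $f(D)$ to $\mathsf{Sim}$.

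The main obstacle is the hybrid bridging step (2): arguing rigorously that real and simulated views are indistinguishable requires carefully decomposing $\mathcal{A}$'s view into the pieces handled by the secret-sharing argument (information-theoretic), the pieces handled by statistical privacy of Wu et al.~and Catrina--Saxena (statistically close but reshared, so I must track the resharing step mentioned in \Cref{sec.ssmpc}), and the piece handled by $\leakage$ (exactly sampled) --- and then showing these compose without the simulator needing anything beyond $(D_A, \leakage(\cdot), f(\cdot))$. In particular the resharing step, where the scheme changes from the data-collection additive sharing to whatever the noise/division sub-protocols use, needs a short lemma that resharing is itself simulatable given $t-1$ shares; I expect this to be routine but it is the place where the cleanest write-up requires the most care.
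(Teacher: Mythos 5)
Your proposal is correct and takes essentially the same route as the paper's proof: both reduce the claim to (i) the triviality of the corrupted clients' views, (ii) the information-theoretic security of $t-1$ shares under the $(t,t)$-threshold scheme, (iii) statistical simulatability of the MPC transcript given the $\epsilon_f$-DP output of $f$, and (iv) \Cref{thm.priv} for the $\epsilon_L$-DP leakage of the key histogram, composing these to obtain the bound $e^{\epsilon_L+\epsilon_f}(\cdot)+negl(\lambda)$. The only difference is presentational: you package the argument as an explicit simulator taking $(D_A,\leakage(\cdot),f(\cdot))$ plus a hybrid argument, whereas the paper splits $\Pi$ into the data-collection and computation sub-protocols $\Pi_1,\Pi_2$, shows each satisfies IND-CDP-MPC separately, and composes them directly by integrating over the intermediate view.
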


\subsection{Extending to the malicious model.}\label{sec.malicious}
Our security proof assumed that we work in the semi-honest model, in which all parties follow the protocol. We now consider how our protocol may function if this assumption is relaxed. First, consider the possibility of malicious computation nodes. In the data collection phase, we require that the computation nodes act semi-honestly so that they are unable to arbitrarily add their own input. We can relax this assumption in the multi-party computation phase. Methods for extending MPC protocols to be secure in the presence of malicious parties have been discussed in the literature~\cite{spdz, mp-spdz}. We can apply the same strategies to our own work, using protocols that are maliciously secure under the assumption that there exists an honest majority \cite{bgw88,dwork06, spdz}. 
This prevents nodes from adding arbitrary noise during the computation phase which may affect the results. In this scenario, we can adjust our bound $b_{N}$ on the number of computation nodes that the adversary may control to be the minimum of $t-1$ and $\lfloor(\ell-1)/2\rfloor$. This ensures that the adversary cannot control a majority of participants in the multi-party computation. \change{As with any other multi-party computation protocol which assumes an honest majority, if the assumption of an honest majority is not satisfied then there are no guarantees on the privacy or correctness of the output. In this paper, we have discussed how our protocol can account for semi-honest computation nodes and a minority of misbehaving computation nodes; however, we are not limited to these two settings. Since the multi-party computation phase is a simple adaption of previously known techniques in MPC, it is straightforward to substitute an MPC protocol with other guarantees. For example, in a scenario where one is concerned about the reconstruction of invalid or incorrect results, we can use an MPC protocol which ensures robustness. Similarly, we could use a protocol which is secure against Byzantine failures if we are concerned about computation nodes failing. The protocol which is ultimately chosen depends on the requirements of the system in practice, but is not limited in any way by our construction.}

Next, we consider the possibility of malicious clients. We can prevent malicious clients from submitting invalid data by having the computation nodes perform validation checks on the shares they receive. \change{Note that the validity check is conducted in MPC by the $t$ computation nodes that receive each piece of data, not all $\ell$ computation nodes.
For example,} when $t$ computation nodes receive the shares of some flag $x$, they
can compute $x\cdot(1-x)$ to check that the result is 0 (which we evaluate in \Cref{app.input_val_exp}). This verifies that the flag is in the set $\{0,1\}$. The computation nodes learn no additional information by computing validation checks with the other nodes which receive shares of the same data. Similar checks can be computed on the data received from the clients to ensure that it exists in some valid range. Thus, we are able to ensure that clients are submitting valid data within some pre-defined, acceptable range.
\change{Validity checks other than those described here can also be implemented and our protocol does not impose any restriction on input validation.}
These verification checks can be implemented in the setting where there is a dishonest majority among the subset of computation nodes which receive shares by using MPC protocols like SPDZ \cite{SPDZ20}.

Finally, suppose that the third party generating dummies is malicious. \change{On one hand, a malicious dummy generator may submit additional data points to attempt to influence the final result. This poses no more of a threat than the case where malicious clients submit false data points. This is a potential vulnerability to any system which collects data from clients and is not prevented by security against malicious adversaries. In such a scenario where the dummy generator is malicious, it is also possible that they submit no dummies or submit an incorrect number of dummies. This would compromise the differential privacy guarantees on the leakage of our protocol, but would not compromise the correctness of the output.} It is difficult to ensure that dummy generation by a single party occurs correctly. However, recall that we have discussed alternatives to dummy generation which provide more distribution of trust in \Cref{sec.dumgen}. Suppose we are using the method described in \Cref{sec.dumgen} where several clients are responsible for generating dummies. Then, we must ensure that the adversary cannot control all of the clients generating dummies. If we assume that the adversary can control at most $b_{C}$ clients, then we can let $b_{C}+1$ clients generate dummies, which ensures that we still achieve differential privacy. This does not affect the security proof because the adversary is unable to control at least one party generating dummies. Ultimately, distribution of trust allows us to consider the presence of malicious parties while achieving secure dummy generation.

\section{Evaluation}\label{sec.eval}
This section provides a theoretical evaluation of our protocol in terms of accuracy, communication, and computation cost. We then compare this with other methods for computing statistics on key-value data. Finally, we present our experimental results to demonstrate the practicality of our protocol.

\subsection{Our Protocol} \label{sec:eval-our-prot}

For the purpose of comparison, we analyze our protocol in the case where $t=2$. This only affects the communication cost during the data collection phase and is a reasonable choice in practice as it minimizes $\leaknoise$.

\paragraph{Communication.} The communication conducted in our protocol can be decomposed into the communication from the clients (and dummy generator) to the computation nodes, i.e.~data collection, and the communication between computation nodes. 
For comparison, we focus mainly on client-to-server communication. The communication between computation nodes is specific to solutions using MPC and depends largely on parameters of the MPC protocol, such as the precision of the noise generated. These are chosen by a system administrator depending on the context of the situation. We provide some experimental results for communication between computation nodes in \Cref{sec:experiments}. In this section, we elaborate on the communication between clients and servers.
First, $2\numkv$ elements are sent from the clients to the computation nodes because each client generates $\compsub=2$ shares for each of the key-value pairs it owns. The number of dummies added for each key follows a geometric distribution with an expected value of $\frac{1}{\geoparam}$, which means $\frac{\numkeys}{\geoparam}$ dummies are generated on average. For each dummy, $\compsub=2$ shares are generated. Therefore, in expectation, $\theta(\numkv + \frac{\numkeys}{\geoparam})$ shares are sent from the clients (and dummy generator) to the computation nodes.

\paragraph{Computation.} We define the computation cost of our protocol to be the number of multiplications in the MPC protocol. For each key, the noise generation (and division in the case of mean estimation) is run to derive the result. The complexity of these operations depends on the chosen protocol and parameters related to their implementation, such as precision. Hence, we consider the complexity to be constant with respect to each key. Therefore, the overall number of multiplications in the MPC protocol across all keys is $O(n)$.

\paragraph{Accuracy.}
Finally, we consider the accuracy of the output of our protocol. Our protocol implements an algorithm with output identical to that of the central model, so it suffices to consider the accuracy of the Laplace mechanism, which is well known. We begin by evaluating the accuracy of our frequency estimation protocol denoted by $F(\allkv)$. Using a tail bound on the Laplace distribution along with the union bound as shown in Theorem 3.8 of Dwork and Roth~\cite{dwork14}, we obtain the following accuracy guarantee $\forall \beta \in (0,1]$.
\begin{equation}
    \Pr[||\freq(\allkv) - \freqalg(\allkv)||_\infty \geq \ln(\numkeys/\beta)\cdot (\distinctkeys/\freqnoise)] \leq \beta.
\end{equation}

Similarly, for the mean estimation protocol, with output denoted $\meanalg(\allkv)$, we obtain that $\forall \beta \in (0,1]$,
\begin{equation}
    \Pr[||\mean(\allkv) - \meanalg(\allkv)||_\infty \geq \ln(\numkeys/\beta)\cdot (2R\lambda/\meannoise \minfreq)] \leq \beta.
\end{equation}

To allow for comparison of different methods, we simplify the above expression and obtain an approximation for the error of our protocols using the standard deviation. For a single key, we compute that the error is approximately $O\left(\Delta/\epsilon\right)$ where $\Delta$ is the sensitivity of the specific protocol. For simplicity, we will only consider the error for a single key's frequency or mean and note the case of all keys can easily be obtained using a union bound. For our analysis, we assume all clients hold this key. Taking into account the privacy budget from the leakage function, we conclude that our solution achieves an approximate error bound of $O(\Delta/(\epsilon-\leaknoise))$. 

\begin{table}[t]
\addtolength{\tabcolsep}{-3pt}
\begin{tabular}{@{}ccccc@{}}
\toprule
             & \begin{tabular}[c]{@{}c@{}}Client-to-Server\\ Communication\end{tabular}  & \begin{tabular}[c]{@{}c@{}}MPC\\ Complexity\end{tabular} & Error \\ \midrule
Our Protocol &  $\theta(\numkv + \frac{\numkeys}{\geoparam})$ & $O(\numkeys)$ & $O\left( \Delta/(\epsilon-\leaknoise)\right)$\\
Central DP   &  $O(\numkv)$  &     -         & $O\left( \Delta/\epsilon\right)$ \\
Naive MPC    &  $O(\numkv \ell)$ & $O(\numkv \numkeys)$ & $O\left(\Delta/\epsilon\right)$ \\
PrivKVM       &    $O(m\log \numkeys)^*$    &     -    & $O\left(\sqrt{m}\Delta/\epsilon\right)^*$ \\
PCKV         &  $O(m\log n)$  &     -    &  $O\left(\sqrt{m}\Delta/\epsilon\right)$ \\ \bottomrule
\end{tabular}
\addtolength{\tabcolsep}{3pt}
\caption{Comparison of client-server communication, computation (number of multiplications), and error (approximate bound). *Includes a multiplicative factor equal to the number of iterations which we exclude.}
\label{tab:eval}
\end{table}

\begin{figure*}
\begin{minipage}{0.45\textwidth}
    \centering
	\begin{tikzpicture}
	 \begin{axis}[xlabel=$\ell$,ylabel={Time (seconds)},width=0.95\textwidth,ylabel near ticks,xlabel near ticks,height=5.5cm,ymin=0]
  
	 \addplot [blue] table [x=nodes, y=mean, col sep=comma] {figures/local_time_freq.csv}; \label{time_freq}
	 \addplot [name path=upper,draw=none] table[x=nodes, y=upper, col sep=comma] {figures/local_time_freq.csv};
	\addplot [name path=lower,draw=none] table[x=nodes, y=lower, col sep=comma] {figures/local_time_freq.csv};
	\addplot [fill=blue!10] fill between[of=upper and lower];

	 \addplot [red] table [x=nodes, y=mean, col sep=comma] {figures/local_time_mean.csv}; \label{time_mean}
	 \addplot [name path=uppermn,draw=none] table[x=nodes, y=upper, col sep=comma] {figures/local_time_mean.csv};
	\addplot [name path=lowermn,draw=none] table[x=nodes, y=lower, col sep=comma] {figures/local_time_mean.csv};
	\addplot [fill=red!10] fill between[of=uppermn and lowermn];

	 \addplot [green] table [x=nodes, y=mean, col sep=comma] {figures/local_time_mean_mal.csv}; \label{time_mean_mal}
	 \addplot [name path=uppermn,draw=none] table[x=nodes, y=upper, col sep=comma] {figures/local_time_mean_mal.csv};
	\addplot [name path=lowermn,draw=none] table[x=nodes, y=lower, col sep=comma] {figures/local_time_mean_mal.csv};
	\addplot [fill=green!10] fill between[of=uppermn and lowermn];

\end{axis}
	 
	 \begin{axis}[width=0.95\textwidth,axis y line*=right,axis x line=none,ylabel=$\epsilon_L$,ylabel near ticks,legend style={font=\small,at={(0.7,0.98)}},height=5.5cm,ymin=0]
	 \addlegendimage{/pgfplots/refstyle=time_freq}\addlegendentry{Freq. Est.}
	\addlegendimage{/pgfplots/refstyle=time_mean}\addlegendentry{Mean Est.}
 	\addlegendimage{/pgfplots/refstyle=time_mean_mal}\addlegendentry{Mean Est. (Mal.)}
	 \addplot [orange] table [x=nodes, y=epsilon, col sep=comma] {figures/local_time_freq.csv};
	 \addlegendentry{$\leaknoise$};
	\end{axis}
	\end{tikzpicture}
        \captionof{figure}{Run time for a single key. Measurements are means with 95\% confidence interval in shaded region. $\leaknoise$ is shown for the number of computation nodes.}
        \label{fig:local_time}
\end{minipage}~%
\hspace{0.5cm}
\begin{minipage}{0.45\textwidth}
\begin{tikzpicture}

	 \begin{axis}[xlabel=$\ell$,ylabel={Data sent (MB)},width=0.95\textwidth,ylabel near ticks,xlabel near ticks,height=5.5cm,ymin=0]
	 \addplot [blue] table [x=nodes, y=mean, col sep=comma] {figures/local_comm_freq.csv}; \label{comm_freq}
	 \addplot [name path=upper,draw=none] table[x=nodes, y=upper, col sep=comma] {figures/local_comm_freq.csv};
	\addplot [name path=lower,draw=none] table[x=nodes, y=lower, col sep=comma] {figures/local_comm_freq.csv};
	\addplot [fill=blue!10] fill between[of=upper and lower];

	 \addplot [red] table [x=nodes, y=mean, col sep=comma] {figures/local_comm_mean.csv}; \label{comm_mean}
	 \addplot [name path=uppermn,draw=none] table[x=nodes, y=upper, col sep=comma] {figures/local_comm_mean.csv};
	\addplot [name path=lowermn,draw=none] table[x=nodes, y=lower, col sep=comma] {figures/local_comm_mean.csv};
	\addplot [fill=red!10] fill between[of=uppermn and lowermn];

	 \addplot [green] table [x=nodes, y=mean, col sep=comma] {figures/local_comm_mean_mal.csv}; \label{comm_mean_mal}
	 \addplot [name path=uppermn,draw=none] table[x=nodes, y=upper, col sep=comma] {figures/local_comm_mean_mal.csv};
	\addplot [name path=lowermn,draw=none] table[x=nodes, y=lower, col sep=comma] {figures/local_comm_mean_mal.csv};
	\addplot [fill=green!10] fill between[of=uppermn and lowermn];

\end{axis}
	 
	 \begin{axis}[width=0.95\textwidth,axis y line*=right,axis x line=none,ylabel=$\epsilon_L$,ylabel near ticks,legend style={font=\small,at={(0.7,0.98)}},height=5.5cm,ymin=0]
	 \addlegendimage{/pgfplots/refstyle=comm_freq}\addlegendentry{Freq. Est.}
	\addlegendimage{/pgfplots/refstyle=comm_mean}\addlegendentry{Mean Est.}
	\addlegendimage{/pgfplots/refstyle=comm_mean_mal}\addlegendentry{Mean Est. (Mal.)}
	 \addplot [orange] table [x=nodes, y=epsilon, col sep=comma] {figures/local_comm_freq.csv};
	 \addlegendentry{$\leaknoise$};
	\end{axis}
	\end{tikzpicture}
        \captionof{figure}{Communication for a single key. Measurements are means with 95\% confidence interval in shaded region. $\leaknoise$ is shown for the number of computation nodes.}
        \label{fig:local_data}
\end{minipage}
\end{figure*}

\subsection{Comparison with Other Work}
We now compare the communication, computation, and accuracy of our protocol with a naive MPC solution that conceals and secret shares keys, existing local DP solutions (PrivKVM \cite{ye19}, PCKV \cite{gu2019pckv}), and the central model. The results of our comparison are summarized in \Cref{tab:eval}. Although there has been no prior work on key-value data in the shuffle model, we discuss potential solutions in this setting in \Cref{sec.rel}.

\paragraph{Naive MPC} A naive application of MPC would have each client secret share keys and values and forward the shares to each computation node. This requires no dummies; however, the computation nodes cannot determine which shares correspond to which key. Each time they want to perform a computation over a key-value pair, they need to collectively index into a table containing all the keys. This results in an additional $\numkv\numkeys$ multiplications in the MPC protocol. Since clients create a share for each computation node, the client-to-server communication cost is $O(\numkv\ell)$. This approach simulates the central model, so it has an approximate error of $O(\Delta/\epsilon)$.

\paragraph{PrivKVM} PrivKVM uses a local DP protocol in which each client sends a single perturbed key-value pair to the data curator. Each client samples this key-value pair randomly from their set of key-value pairs. This constitutes all the communication in the protocol. Thus, using our notation for the size of the key domain and the number of users, the total communication complexity of this protocol is $O(m\log n)$. The curator calculates the desired statistics (frequency and mean) by iterating over the data they received and then calibrating the result. This computation cost is negligible in comparison to an MPC protocol. Finally, PrivKVM utilizes the Harmony protocol~\cite{nguyn2016collecting} and thus achieves an approximate error bound of $O(\sqrt{m}\Delta/\epsilon)$. We remark that the accuracy depends on the number of iterations $c$ carried out in the PrivKVM protocol. However, we follow the approach taken by the authors and treat $c$ as a constant in our analysis. Additionally, there would be a sampling error due to the fact each client only sends one of their keys pairs which we omit from this analysis.

\paragraph{PCKV}
This protocol improves upon PrivKVM. To estimate the mean, each client sends either a key-value pair or a vector equal to the size of the key domain. For our comparison, we will consider the former which has a lower overall communication complexity of $O(m\log n)$ across all users. The protocol does not require interaction, so there is no multiplicative factor for the number of rounds. In terms of accuracy, all of the improvements influence the error by a constant amount. While this showed significant improvements in experimental evaluations, it has no asymptotic advantage over PrivKVM.

\subsection{Experimental Results}\label{sec:experiments}

To evaluate the practicality of our protocol, we compute a series of benchmarks. We recall that our protocol consists of two phases: the data collection and multi-party computation phase. The bottleneck of the data collection phase is the communication of data from all clients. Collecting shares is as simple as sending a single message (to several nodes). This overhead is shared by all other related work and a detailed comparison of the size of the communication is given in Table~\ref{tab:eval}. Moreover, the bulk of the overhead in the data collection phase consists of the time required for coordination amongst the clients, i.e. the time for all clients to send their data, which is not reflected in experiments. The computational overhead of our system in the data collection phase is the dummy generation which incurs a negligible overhead equivalent to sharing a few additional key-value pairs (we evaluate exactly how many in \Cref{app:compare}). The most expensive part of our protocol is the multi-party computation phase which requires tightly synchronized nodes. Thus, we implement this phase using the noise generation protocol from Wu et al.~\cite{wu16} and provide benchmarks for the run time and communication. We investigate the effects of the number of computation nodes, network latency, and the number of keys.
Our implementation and scripts for the evaluation are publicly available.~\footnote{https://git.uwaterloo.ca/r5akhava/selective-mpc}

\paragraph{Setup}
We use the MP-SPDZ~\cite{mp-spdz} framework to implement Protocol \cite[\S7]{wu16} from Wu et al.~for generating a Laplace random variate.
We simulate the frequency or mean estimation protocol by sampling from the Laplace distribution and adding it to an arbitrary result (either a constant or a division of constants).
We use the Shamir protocol, implemented in the MP-SPDZ framework.~\footnote{https://github.com/data61/MP-SPDZ}
Using this implementation, we observed that the underlying technique for generating Beaver triples switches when the number of computation nodes exceeds a threshold.
The default threshold was set too high for our use case, resulting in sub-optimal run time results.
We set the threshold to ten computation nodes to optimize run time results; however, there remains a noticeable change in the communication results.  

\paragraph{Number of Comp. Nodes}
First, we measure how our protocol scales with the number of computation nodes. Since our protocol can be parallelized over the number of keys, we only consider a single key for this benchmark. To create multiple parties, we use up to 30 Amazon EC2 T2 instances. Each machine has four cores, 16 GB of memory, and is located in the same Ohio data center.

In Figure~\ref{fig:local_time}, we plot the average run time over 100 runs with 95\% confidence intervals as the number of nodes increases for frequency and mean estimation \change{in the semi-honest model and mean estimation in the malicious model.} For context, we also plot the corresponding $\leaknoise$ for each $\compnodes$. We see roughly quadratic growth in the run time, which is consistent with the asymptotic complexity of MPC. \change{In the mean estimation protocol, for 30 nodes (where $\leaknoise = 0.512$) we observe 0.88 and 6.36 seconds for the semi-honest and maliciously secure protocol, respectively.
Moreover, for 5 nodes (where $\leaknoise=0.758$) the protocol runs in 0.07 and 0.83 seconds in the semi-honest and malicious model, respectively.}
Figure~\ref{fig:local_data} shows the corresponding total communication. The dip around ten nodes comes from the change in protocol mentioned in the previous section. \change{In the mean estimation protocol, we observe 229.45 MB and 2976 MB for 30 nodes in the semi-honest and malicious model and 8.77 MB and 36.11 MB for five nodes in the semi-honest and malicious model.} We conclude that our protocol scales acceptably to realistic numbers of computation nodes.

\paragraph{Different Data Centers}
Having all computation nodes in the same data center is not always possible in practice. Thus, in this experiment, we consider the effects of network latency from having computation nodes in different locations. To do this, we consider three different scenarios: local, remote, and distant. In the local scenario, all nodes are in the Ohio data center. In the remote setting, half of the nodes are in the Ohio data center and half in Northern California. Finally, the distant scenario has half of the nodes in the Ohio data center and the other half in Frankfurt, Germany. We consider a single key and conduct this experiment for 3, 6, 10, and 20 computation nodes. \change{Moreover, we experiment with a semi-honest and malicious secure MPC protocol.}
The results are given for 100 runs in \Cref{tab:nodes_time}. We observe reasonable run times even for a high number of computation nodes. Despite increased latency in wide area networks, the total run time remains under six seconds. 

\begin{figure}
    \centering
    \addtolength{\tabcolsep}{-2pt}
    \begin{tabular}{ccccc}
        \toprule
        \# Nodes & $\leaknoise$ & Local & Remote & Distant \\
        \midrule
        \multirow{2}{*}{3} & \multirow{2}{*}{1.19} & 0.037~(0.002) & 2.323~(0.020) & 4.50~(0.52)\\ 
        & & 0.19~(0.04) & 2.91~(0.06) & 5.66~(0.12) \\ \midrule
        
        \multirow{2}{*}{6} & \multirow{2}{*}{0.69} & 0.077~(0.004) & 2.65~(0.15) & 4.80~(0.12)\\ 
        & & 0.59~(0.03) & 3.55~(0.11) & 6.38~(0.14) \\ \midrule
        
        \multirow{2}{*}{10} & \multirow{2}{*}{0.59} & 0.237~(0.047) & 3.01~(0.12) & 5.85~(0.27)\\ 
        & & 1.12~(0.05) & 4.34~(0.18) & 7.53~(0.24) \\ \midrule

        \multirow{2}{*}{20} & \multirow{2}{*}{0.53} & 0.573~(0.064) & 3.18~(0.14) & 5.73~(0.22) \\
        & & 2.91~(0.09) & 6.68~(0.27) & 10.17~(0.60) \\
        \bottomrule
\end{tabular}
\addtolength{\tabcolsep}{2pt}
\captionof{table}{Run time (in seconds) for one key. Measurements are means with standard deviations in parentheses. \change{For each configuration the first row and second row use a semi-honest and maliciously secure MPC, respectively.}}
\label{tab:nodes_time}
\vspace{1em}
\begin{tabular}{cccc}
        \toprule
        \# Keys & Local & Remote & Distant \\
        \midrule
        \multirow{2}{*}{10} & 0.075~(0.003) & 2.58~(0.09) & 5.015~(0.08)\\ 
        & 0.15~(0.00) & 3.10~(0.07) & 6.08~(0.43)\\ \midrule

        \multirow{2}{*}{100} & 0.58~(0.03) & 5.0~(0.3) & 9.4~(0.4)\\ 
        & 1.34~(0.11) & 5.64~(0.37) & 10.64~(0.27)\\ \midrule

        \multirow{2}{*}{1000}  & 2.38~(0.03) & 32.52~(0.87) & 63.6~(1.1)\\ 
        & 3.59~(0.06) & 34.28~(0.94) & 65.86~(0.38)\\ \midrule

        \multirow{2}{*}{10000}  & 20.00~(0.07) & 307.98~(8.09) & 602.3~(5.6)\\ 
        & 26.64~(0.09) & 323.65~(0.23) & 627.30~(0.39)\\
        \bottomrule
    \end{tabular}
    \captionof{table}{Run time (in seconds) for many keys with 5 computation nodes, which corresponds to $\leaknoise=0.758$. Measurements are means with standard deviations in parentheses. \change{For each configuration the first and second row use a semi-honest and maliciously secure MPC, respectively.}}
    \label{tab:keys_time}
\end{figure}
\paragraph{Varying Key Domain Sizes}
Furthermore, we investigate the effects of multiple keys. Recall that each key is treated separately in our protocol, so we parallelize the computation over the keys. To do this, we use the built-in threading functionality in MP-SPDZ. We use five computation nodes (which achieves $\leaknoise=0.758$), each running an Amazon EC2 C5 instance with 64 cores and 128 GB of memory. We consider key domain sizes of 10, 100, 1000, and 10,000. As with the previous experiment, we use local, remote, and distant network setups. Results are given for 40 runs in \Cref{tab:keys_time}. We observe reasonable run times that increase approximately linearly after the number of keys exceeds the number of cores.

\paragraph{Input Validation}
We implement and examine the cost of validating the client's input in \Cref{app.input_val_exp}. Our experiments suggest that input validation does not impose an impractical burden on the overall protocol.

\section{Discussion}\label{sec.future}
\paragraph{Privacy Limitations}

An interesting property of our protocol is that for a fixed choice of $\compnodes$, there is a lower bound on the leakage privacy $\leaknoise$. This differs from other common DP mechanisms where the privacy budget can be made arbitrarily small, albeit with detrimental effects on the utility. We assume $\compnodes$ has a constant value because it is typically a fixed aspect of the setup. In contrast, $\geoparam$ is easily modifiable so we can choose $\geoparam \in (0,1)$ to minimize $\leaknoise$, where $\geoparam$ is the parameter for the geometric distribution.
Figure \ref{fig:privacy-r} plots the leakage privacy as a function of $\geoparam$ for three fixed values of $\compnodes$ based on the result of \Cref{thm.priv}. The lower bound on $\leaknoise$, which arises from the $\max$ function in the formula for $\leaknoise$, can be seen in the graphs. Specifically, for a given $\compnodes$, the smallest achievable $\leaknoise$ can be derived as follows.

\begin{align}
    \min \epsilon_L 
    =& \min_{r,t} \left\{ \Delta\ln{\left( \max\left\{ \frac{1}{1-\geoparam},\frac{1}{1-\frac{\compsub}{\compnodes}}+1-\geoparam \right\}\right)} \right\} \\
    =& \Delta \ln \left(\frac{2(1-2/\compnodes)}{\sqrt{1+4(1-2/\compnodes)^2}-1}\right)
    \label{eq:achievable}
\end{align}

which is achieved at $t=2$ and 
\begin{align}
    r = 1 - \frac{\sqrt{1+4(1-2/\compnodes)^2}-1}{2(1-2/\compnodes)}    
\end{align}

Figure \ref{fig:achievable} graphs the smallest achievable privacy of the leakage function based on Equation (\ref{eq:achievable}). The minimum of Equation (\ref{eq:achievable}) is obtained at $\compnodes=\infty$, albeit impossible to achieve in practice, which gives a lower bound on $\epsilon_L$. In particular, if $\Delta=1$ for simplicity, then

\begin{align}
    \epsilon_L > \min_{\compnodes} \ln \left(\frac{2(1-2/\compnodes)}{\sqrt{1+4(1-\frac{2}{\compnodes})^2}-1}\right) = \ln\left(\frac{2}{\sqrt{5}-1}\right) = 0.4812.
\end{align}

Such a lower bound may be unacceptable in certain scenarios. 
Tightening the analysis of \Cref{thm.priv} would improve upon this bound.

\begin{figure}[!ht]
    \centering
    \begin{subfigure}{.45\columnwidth}
        \begin{tikzpicture}[]
            \begin{axis}[xmin=0.2, ymin=0.4, samples=50, xlabel=$r$, ylabel=$\epsilon_L$,ylabel near ticks,width=4.5cm,height=5cm]
            \addplot[blue, thick, domain=0.1:0.7, variable=\r] (\r,{ln(max(1/(1-0.4)+1-\r,0.05 + 1/(1-\r)))});
              \addlegendentry{$\compnodes=5$}
            \addplot[orange, thick, domain=0.05:0.7, variable=\r] (\r,{ln(max(1/(1-0.2)+1-\r,0.03+1/(1-\r)))});
              \addlegendentry{$\compnodes=10$}
            \addplot[red, thick, domain=0.05:0.7, variable=\r] (\r,{ln(max(1/(1-0.1)+1-\r,1/(1-\r)))});
              \addlegendentry{$\compnodes=20$}
            \end{axis}
        \end{tikzpicture}
        \caption{}
        \label{fig:privacy-r}
    \end{subfigure}\hspace{5mm}%
    \begin{subfigure}{.45\columnwidth}
        \begin{tikzpicture}[]
            \begin{axis} [xmin=0, ymin=0.4, samples=35, xlabel=$\compnodes$, ylabel=$\min\ \epsilon_L$,ylabel near ticks,width=4.5cm,height=5cm]
                  \addplot[domain=3:50, thick, variable=\l] (\l, {ln((2*(1-2/\l))/(sqrt(1 + (4*(1-2/\l)*(1-2/\l))) - 1))});
                  \addplot[domain=3:50, dotted, variable=\l] (\l, {ln(2/(sqrt(5)-1))});
            \end{axis}
        \end{tikzpicture}
        \caption{}
        \label{fig:achievable}
    \end{subfigure}
    \vspace{-1cm}
    \caption{(a) Leakage privacy as a function of $\geoparam$ for $t=2$.
    (b) Lower bound on leakage privacy.
    }
\end{figure}


\section{Related Work}\label{sec.rel}

\paragraph{Differentially Private Statistics over Key-Value Data}
Our work is most closely related to prior literature computing key-value analytics under the LDP model. The first work in this space was PrivKV from Ye et al.~\cite{ye19}. Ye et al.~modify the Harmony randomized response based protocol~\cite{nguyn2016collecting} to better maintain the relationships between the keys and values. They apply their baseline technique in an iterative manner, called PrivKVM, where each iteration gets closer to the true mean. To reduce communication costs, PrivKVM requires each client only send a single randomly sampled key-value pair. This sampling comes at the cost of additional sampling error in the final result. Follow-up work from Gu et al.~\cite{gu2019pckv} introduced PCKV, which improves upon PrivKVM in several aspects. They apply an advanced sampling procedure to enhance utility over the sampling done by PrivKVM. They also require only a single iteration and provide a tighter analysis of the privacy budget consumption. Sun et al.~investigate alternate perturbation techniques under the PrivKV framework and investigate conditional analysis of key-value data~\cite{sun2019conditional}.

While this line of work has made substantial advancements to the computation of key-value statistics in the local trust model, it suffers an inherent limitation. In any LDP solution, the randomness is introduced to each value before aggregation and thus the error compounds in the final result. As we show in our evaluation, this increases the error by a factor of $O(\sqrt{m})$. Bittau et al.~\cite{frist_shuffle} investigate reducing the effects of compounding noise by introducing the shuffle model of DP. The shuffle model lies in between the local and central models, with a weaker trust assumption than the central model but improved utility guarantees over the local model. In this model, a semi-trusted shuffler lies between the participants and the untrusted curator. The shuffler introduces anonymity by performing a permutation on the noisy data it receives from the participants before handing it to the curator. Since the initial work from Bittau et al.~\cite{frist_shuffle}, there have been many interesting works in this space~\cite{privacy_blanket, google_shuffle,ullman_shuffle}. Although these works significantly reduce the dependency on the number of clients (for example Balle et al.'s work has a dependency of $O(\sqrt[6]{m})$ \cite{privacy_blanket}), they still involve adding noise locally, which in addition to making the error dependent on the number of clients, disrupts the correlation between keys and values. It has been shown that the utility of this approach is strictly between that of the local and central models~\cite{ullman_shuffle,privacy_blanket}. Thus in our work, we take the approach of simulating the central trust model using MPC to avoid any dependence on the number of clients by adding randomness in a central manner. 

\paragraph{Combining Cryptography and Differential Privacy}
The area of research dedicated to the intersection of cryptography and differential privacy has been termed DP-Cryptography \cite{dpcryptography}. One direction of this research uses MPC when computing differentially private statistics to eliminate the need to trust a central data curator \cite{eigner14,dwork06,djoin}. Dwork et al.~\cite{dwork06} provided protocols to generate Gaussian and exponentially distributed noise in a distributed setting. The combination of secure computation and DP was further explored in the construction of PrivaDA~\cite{eigner14}, a generic design that allowed for noise generation in a fully distributed setting. We note that these constructions cannot immediately be applied to key-value data without incurring high computation costs as a result of secret sharing the keys. Pettai and Laud~\cite{pettai15} implemented the DP sample-and-aggregate method in an MPC framework, showing that the combination can still provide reasonable performance. Later work used homomorphic encryption for the secure computation protocols~\cite{acar17,li17,shi11,roy2020crypt}. Goryczka et al.~\cite{goryczka13} implemented and compared the security, performance, and scalability of early models that combined MPC and DP.

Another area of DP-Cryptography uses DP to improve the performance of cryptographic primitives. Allowing for differentially private leakage can improve the computation or communication costs. Chen et al.~\cite{chen18} employ this technique in the context of searchable symmetric encryption. Toledo et al.~\cite{toledo16} relax the information-theoretic requirements of private information retrieval (PIR) to provide more efficient constructions of PIR which satisfy differentially private leakage. Similarly, Shrinkwrap~\cite{shrinkwrap} uses differentially private query processing to improve the performance of evaluating SQL queries. Mazloom and Gordon~\cite{mazloom18} extend the concept of differentially private leakage to secure multi-party computations of histograms, PageRank, and matrix factorization. DP leakage has also been used in the context of anonymous communication to reduce communication costs \cite{vuvuzela,stadium,karaoke}. Our work is unique in that we simultaneously take both approaches to combining DP and cryptography: we allow differentially private leakage to improve performance and use secure computation to eliminate trust assumptions.

\section{Conclusion}
Although local differentially private solutions are becoming increasingly popular, they are also well known to provide poor utility. Unfortunately, a common way to circumvent this issue is to let $\epsilon$ be unreasonably large, offering almost meaningless protection to users' privacy. By minimizing the costs that organizations incur when choosing privacy-aware solutions, we can encourage the development and use of systems that respect users' privacy.

In line with this ideal, we provide an improved solution for calculating the frequency and mean of key-value data in the local trust model. Our work presents a novel approach to MPC, which circumvents traditional efficiency issues. Simultaneously, our approach enables us to achieve pure differential privacy guarantees on leakage without removing any data. By leveraging cryptographic primitives, we maintain the correlation between keys and values, resulting in a high accuracy algorithm. We believe that offering high efficiency, accuracy, privacy, and conceptual simplicity incentivizes the adoption of our approach in practice.

\begin{acks}
We gratefully acknowledge the support of NSERC for grants RGPIN-05849, IRC-537591, the NSERC Postgraduate Scholarship-Doctoral program, and the Royal Bank of Canada for funding this research.
\end{acks}

\bibliographystyle{ACM-Reference-Format}
\balance
\bibliography{references}
\appendix
\section{Proof of Privacy}\label{sec:privacy}
\privacythm*

\begin{proof}
We begin by expanding the statement needed to prove differential privacy. First, we derive
\begin{equation}\label{eqn:reduction}\small
    \frac{\Pr[\leakage(\allkv)=\observation]}{\Pr[\leakage(\allkv')=\observation]}=\frac{\prod\limits_{k=0}^\numkeys \Pr[\leakage_k(\allkv)=\observation_k]}{\prod\limits_{k=0}^\numkeys \Pr[\leakage_k(\allkv')=\observation_k]}\leq \left(\frac{\Pr[\leakage_j(\allkv)=\observation_j]}{\Pr[\leakage_j(\allkv')=\observation_j]}\right)^\distinctkeys
\end{equation}
where $j$ is one of at most $\distinctkeys$ keys that differ between $\allkv$ and $\allkv'$.

We now consider the two worst cases of neighbouring inputs for a specific key $j$ under our assumption that $|\freq_j(\allkv)-\freq_j(\allkv')|\leq 1$. These cases are when $\freq_j(\allkv) - 1 = \freq_j(\allkv')$ and when $\freq_j(\allkv)+1=\freq_j(\allkv')$. First, we consider the case that $\freq_j(\allkv) - 1 = \freq_j(\allkv')$.
In this case we have that
\begin{align}
    \frac{\Pr[\leakage_j(\allkv)=\observation_j]}{\Pr[\leakage_j(\allkv')=\observation_j]}&=
   \frac{\sum\limits_{v=0}^\infty \mathbb{G}(v;\geoparam) \mathbb{B}(\observation_j;\freq_j(\allkv)+v,\subsetratio)}
   {\sum\limits_{v=0}^\infty \mathbb{G}(v;\geoparam) \mathbb{B}(\observation_j;\freq_j(\allkv')+v,\subsetratio)}\\
      =& \frac{\sum\limits_{v=1}^\infty \mathbb{G}(v-1;\geoparam) \mathbb{B}(\observation_j;\freq_j(\allkv')+v,\subsetratio)}
   {\sum\limits_{v=0}^\infty \mathbb{G}(v;\geoparam) \mathbb{B}(\observation_j;\freq_j(\allkv')+v,\subsetratio)}\label{eqn:assump_1},
\end{align}
where (\ref{eqn:assump_1}) follows from a change of variables $v=v-1$ and our assumption that $\freq_j(\allkv) - 1 = \freq_j(\allkv')$.
Next we apply the fact that $\mathbb{G}(v-1;\geoparam)=\frac{1}{1-\geoparam}\mathbb{G}(v;\geoparam)$, for $v\geq1$, to obtain
\begin{align}
    \frac{\Pr[\leakage_j(\allkv)=\observation_j]}{\Pr[\leakage_j(\allkv')=\observation_j]}&= \frac{\sum\limits_{v=1}^\infty \frac{1}{1-\geoparam}\mathbb{G}(v;\geoparam) \mathbb{B}(\observation_j;\freq_j(\allkv')+v,\subsetratio)}
   {\sum\limits_{v=0}^\infty \mathbb{G}(v;\geoparam) \mathbb{B}(\observation_j;\freq_j(\allkv')+v,\subsetratio)}\\
   &\leq \frac{1}{1-\geoparam}\label{eqn:case_1},
\end{align}
where the last inequality follows by widening the summation limit to include zero and cancelling like terms. We note that this gives us the first term in the maximum from (\ref{eqn:epsilon_statement}).

Next we consider the case where $\freq_j(\allkv)+1=\freq_j(\allkv')$. The first steps proceed similarly except we perform the change of variables $v=v+1$. That is,
\begin{align}\hspace*{-1em}
    \frac{\Pr[\leakage_j(\allkv)=\observation_j]}{\Pr[\leakage_j(\allkv')=\observation_j]}&=
   \frac{\sum\limits_{v=0}^\infty \mathbb{G}(v;\geoparam) \mathbb{B}(\observation_j;\freq_j(\allkv)+v,\subsetratio)}
   {\sum\limits_{v=0}^\infty \mathbb{G}(v;\geoparam) \mathbb{B}(\observation_j;\freq_j(\allkv')+v,\subsetratio)}\\
   =&\ \frac{\sum\limits_{v=-1}^\infty \mathbb{G}(v+1;\geoparam) \mathbb{B}(\observation_j;\freq_j(\allkv')+v,\subsetratio)}
   {\sum\limits_{v=0}^\infty \mathbb{G}(v;\geoparam) \mathbb{B}(\observation_j;\freq_j(\allkv')+v,\subsetratio)}\\
    =&\ \frac{\mathbb{G}(0;\geoparam)\mathbb{B}(\observation_j;\freq_j(\allkv')-1,\subsetratio)}
   {\sum\limits_{v=0}^\infty \mathbb{G}(v;\geoparam) \mathbb{B}(\observation_j;\freq_j(\allkv')+v,\subsetratio)} + (1-\geoparam)\label{eqn:breakout},
\end{align}
where (\ref{eqn:breakout}) follows from separating off the first term of the sum in the numerator and applying the fact that $\mathbb{G}(v+1;\geoparam)=(1-\geoparam)\mathbb{G}(v;\geoparam)$, for $v\geq0$, to cancel like terms. 
Next, we observe that when $\observation_j\geq \freq_j(\allkv')$, (\ref{eqn:breakout}) simplifies to $(1-\geoparam)$ since the binomial coefficient in the numerator is zero. Hence, all that remains is the case where $\observation_j<\freq_j(\allkv')-1$. In this case, we get the following inequality by decreasing the upper limit of the sum in the denominator to zero. Specifically,
\begin{align}
     \frac{\Pr[\leakage_j(\allkv)=\observation_j]}{\Pr[\leakage_j(\allkv')=\observation_j]}&\leq \frac{\mathbb{B}(\observation_j;\freq_j(\allkv')-1,\subsetratio)}
   { \mathbb{B}(\observation_j;\freq_j(\allkv'),\subsetratio)} + (1-\geoparam)\label{eqn:loose_part}\\
       =&\ \left(\frac{1}{1-\subsetratio}\right)\frac{\binom{\freq_j(\allkv')-1}{\observation_j}}
   {\binom{\freq_j(\allkv')}{\observation_j}} +
   (1-\geoparam)\label{eqn:expand_binom}\\
   =&\ \left(\frac{1}{1-\subsetratio}\right)\frac{\freq_j(\allkv')-\observation_j}
   {\freq_j(\allkv')} +
   (1-\geoparam)\label{eqn:cancel_binom}\\
      \leq&\ \frac{1}{1-\subsetratio} + 1-\geoparam\label{eqn:case_2},
\end{align}
where (\ref{eqn:expand_binom}) comes from applying (\ref{eqn:pdf_bino}) and (\ref{eqn:cancel_binom}) applies the definition of the binomial coefficient. Finally, (\ref{eqn:case_2}) follows from the fact that $0\leq \observation_j<\freq_j(\allkv')-1$.

Combining the two cases from (\ref{eqn:case_1}) and (\ref{eqn:case_2}) gives
\begin{equation}
    \frac{\Pr[\leakage_j(\allkv)=\observation_j]}{\Pr[\leakage_j(\allkv')=\observation_j]} \leq \max\left\{ \frac{1}{1-\geoparam},\frac{1}{1-\subsetratio}+1-\geoparam \right\}.
\end{equation}
Substituting this result into (\ref{eqn:reduction}) from above, it follows that
\begin{align}
    &\frac{\Pr[\leakage(\allkv)=\observation]}{\Pr[\leakage(\allkv')=\observation]} \leq \left(\frac{\Pr[\leakage_j(\allkv)=\observation_j]}{\Pr[\leakage_j(\allkv')=\observation_j]}\right)^\distinctkeys \\
    \leq&\ 
    \left(\max\left\{ \frac{1}{1-\geoparam},\frac{1}{1-\subsetratio}+1-\geoparam \right\}\right)^\distinctkeys = e^{\leaknoise},
\end{align}
where
 \begin{equation}\label{eqn:leak_epsilon}
     \leaknoise = \distinctkeys\ln{\left( \max\left\{ \frac{1}{1-\geoparam},\frac{1}{1-\subsetratio}+1-\geoparam \right\}\right)}.
 \end{equation}
\end{proof}

\section{Privacy of the Output}\label{sec:output_privacy}
\paragraph{Frequency Estimation.}
We denote the output of the frequency estimation protocol by $\freqalg(\allkv)$. The output of this algorithm is $\freq(\allkv) + Lap(\Delta/\freqnoise)^\numkeys$, where $\Delta=\distinctkeys$ and $Lap(\Delta/\freqnoise)^\numkeys$ represents $\numkeys$ i.i.d draws from the Laplace distribution with parameter $\Delta/\freqnoise$.
\begin{theorem}
    The algorithm $\freqalg$ satisfies $\freqnoise$-DP.
\end{theorem}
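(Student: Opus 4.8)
The plan is to recognize $\freqalg$ as an instance of the Laplace mechanism and invoke the known fact that it satisfies $\freqnoise$-DP, with the only real work being the sensitivity bound. Concretely, $\freqalg(\allkv) = \freq(\allkv) + Lap(\Delta/\freqnoise)^{\numkeys}$ where $\Delta = \distinctkeys$, so by the definition of the Laplace mechanism it suffices to show that the $\ell_1$-sensitivity of the map $\allkv \mapsto \freq(\allkv)$ is at most $\distinctkeys$ under our neighbouring relation (addition or removal of a single user's key-value pairs).

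First I would fix neighbouring datasets $\allkv, \allkv'$ differing by one user $i$'s contribution $\allkv_i$. By the \textbf{distinct keys} assumption, user $i$ holds at most one value per key, and by the \textbf{maximum size} assumption $|\allkv_i| \leq \distinctkeys$, so adding or removing this user changes the count $\freqk$ by at most $1$ for at most $\distinctkeys$ distinct keys and leaves all other counts unchanged. Hence $\|\freq(\allkv) - \freq(\allkv')\|_1 = \sum_{k} |\freqk(\allkv) - \freqk(\allkv')| \leq \distinctkeys$, i.e.\ $\Delta^{(\freq)}_1 \leq \distinctkeys = \Delta$.

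Then I would simply apply the standard theorem that the Laplace mechanism with noise parameter $\Delta^{(f)}_1/\epsilon$ is $\epsilon$-DP (cited earlier in the paper as well known), instantiated with $f = \freq$, sensitivity $\Delta = \distinctkeys$, and privacy parameter $\freqnoise$; since the noise scale used in Protocol~\ref{prot.freq} is exactly $\Delta/\freqnoise = \distinctkeys/\freqnoise$, the mechanism $\freqalg$ is $\freqnoise$-DP. One should also note that the post-processing of the secret-shared output (reconstruction) does not affect the DP guarantee, and that the dummies contribute $0$ to every flag sum so they do not alter $\freq(\allkv)$; these are both immediate.

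There is essentially no hard obstacle here — the statement is a textbook application of the Laplace mechanism. The only point requiring a moment's care is confirming that the correct sensitivity is $\distinctkeys$ rather than, say, $1$: this is precisely where the \emph{user-level} neighbouring relation (as opposed to a single key-value-pair change) enters, and it is the reason the protocol uses $\Delta = \lambda$ in Protocol~\ref{prot.freq}. I would make sure the proof explicitly flags that the bound $|\freqk(\allkv) - \freqk(\allkv')| \leq 1$ together with $|\allkv_i| \leq \distinctkeys$ yields $\ell_1$-sensitivity $\distinctkeys$, and then conclude.
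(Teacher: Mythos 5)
Your proposal is correct and matches the paper's own proof essentially line for line: both bound the $\ell_1$-sensitivity of $\freq$ by $\distinctkeys$ using the distinct-keys and maximum-size assumptions, then invoke the standard $\epsilon$-DP guarantee of the Laplace mechanism. The extra remarks about dummies and post-processing are fine but not needed beyond what the paper states.
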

\begin{proof}
    Each client has at most one copy of each key, so we have that $|\freqk(\allkv) - \freqk(\allkv')|\leq 1$. Additionally, we know that each client can have at most $\distinctkeys$ distinct keys. Hence,
    \begin{equation}
        \Delta=\max_{\allkv,\allkv'}\|\freq(\allkv)-\freq(\allkv')\|_1 = \distinctkeys.
    \end{equation}
    The result then follows trivially from the fact that the Laplace mechanism is differentially private.
\end{proof}
\paragraph{Mean Estimation.}
Recall that the output of the mean estimation protocol is denoted by $\meanalg(\allkv)$. Denote the true mean of the dataset $\allkv$ to be $\mean(\allkv)$. Thus, the output of the algorithm $\meanalg$ is $\mean(\allkv) + Lap(\Delta/\meannoise)^\numkeys$ where $\Delta=\frac{\distinctkeys 2R}{\minfreq}$ and $Lap(\Delta/\freqnoise)^\numkeys$ represents $\numkeys$ i.i.d draws from the Laplace distribution.
\begin{theorem}
    The algorithm $\meanalg$ satisfies $\meannoise$-DP.
\end{theorem}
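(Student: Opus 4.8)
The plan is to mirror the proof just given for $\freqalg$: reduce the claim to bounding the $\ell_1$-sensitivity of the mean vector and then invoke the privacy of the Laplace mechanism. Since $\meanalg(\allkv) = \mean(\allkv) + Lap(\Delta/\meannoise)^\numkeys$ with $\Delta = \frac{\distinctkeys 2R}{\minfreq}$, the entire task is to show that
\[
\Delta^{(\mean)}_1 = \max_{(\allkv,\allkv')} \|\mean(\allkv) - \mean(\allkv')\|_1 \leq \frac{\distinctkeys 2R}{\minfreq},
\]
where $(\allkv,\allkv')$ ranges over user-level neighbours, i.e.\ datasets differing by the insertion or removal of a single client's set of key-value pairs, both satisfying the standing assumptions.

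First I would use the \emph{distinct keys} and \emph{maximum size} assumptions: one client contributes a value to at most $\distinctkeys$ distinct keys, so $\mean(\allkv)$ and $\mean(\allkv')$ differ in at most $\distinctkeys$ coordinates, and it suffices to bound $|\meank(\allkv) - \meank(\allkv')|$ for a single affected key $k$ and multiply by $\distinctkeys$. Fixing such a $k$ and writing $\meank = \sumk/\freqk$, by symmetry it is enough to handle the case where $\allkv'$ adds one pair $\langle k,v\rangle$, so that $\freqk(\allkv') = \freqk(\allkv)+1$ and $\sumk(\allkv') = \sumk(\allkv)+v$. A short computation then gives
\[
|\meank(\allkv') - \meank(\allkv)| = \left|\frac{\sumk+v}{\freqk+1} - \frac{\sumk}{\freqk}\right| = \frac{|v - \meank(\allkv)|}{\freqk+1}.
\]
The \emph{bounded values} assumption places both $v$ and $\meank(\allkv)$ in a common interval of length $2R$, so the numerator is at most $2R$; the \emph{minimum frequency} assumption, applied to whichever of $\allkv,\allkv'$ has the smaller count for $k$, gives $\freqk+1 \geq \minfreq$, so $|\meank(\allkv') - \meank(\allkv)| \leq 2R/\minfreq$. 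The removal direction is identical after relabelling, using $\freqk(\allkv)-1 \geq \minfreq$. Summing over the at most $\distinctkeys$ affected keys yields $\|\mean(\allkv) - \mean(\allkv')\|_1 \leq \distinctkeys \cdot 2R/\minfreq = \Delta$, and the theorem follows from the Laplace mechanism.

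The one delicate point, and the main obstacle, is bookkeeping around which of the two datasets the minimum-frequency bound is invoked for in each direction, together with the corner case in which an entire key appears or disappears (its count moving between $0$ and $1$, making $\meank$ nominally $0/0$ on one side). I would handle this by restricting the neighbouring relation to pairs of datasets that both satisfy the standing assumptions, in particular $\freqk \geq \minfreq \geq 1$ for every key in play, so that the denominators above are bounded away from $0$ on the relevant side; beyond that the argument is the routine algebra sketched above.
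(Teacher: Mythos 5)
Your proposal is correct and follows essentially the same route as the paper: bound the per-key change of the mean by $2R/\minfreq$ using the bounded-values and minimum-frequency assumptions, multiply by the at most $\distinctkeys$ affected coordinates to get the $\ell_1$-sensitivity $\frac{\distinctkeys 2R}{\minfreq}$, and conclude via the privacy of the Laplace mechanism. The only difference is that you spell out the algebra $\bigl|\frac{\sumk+v}{\freqk+1}-\frac{\sumk}{\freqk}\bigr|=\frac{|v-\meank(\allkv)|}{\freqk+1}$ and the corner cases that the paper's proof simply asserts.
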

\begin{proof}
    Each client has at most one value for each key and each value is in $[x-R,x+R]$, so it follows that $|\meank(\allkv) - \meank(\allkv')|\leq \frac{2R}{\freqk(\allkv)}$. Recall that $\minfreq=\min_{k\in[\numkeys]}{\freqk(\allkv)}$. Thus, we have that $\forall k\in[\numkeys]$, $|\meank(\allkv) - \meank(\allkv')|\leq \frac{2R}{\minfreq}$. Additionally, we know that each client has at most $\distinctkeys$ distinct keys. Therefore,
    \begin{equation}
        \Delta=\max_{\allkv,\allkv'}\|\mean(\allkv)-\mean(\allkv')\|_1 \leq \frac{\lambda 2R}{\minfreq}.
    \end{equation}
    Then privacy follows trivially from the fact that the Laplace mechanism is differentially private.
\end{proof}

\section{Proof of Security}
\label{sec:proof-security}

\securitythm*

\begin{proof}
Consider the view of a probabilistic polynomial-time (in $\lambda$) adversary $\mathcal{A}$ with input $D_A$, $\textsc{VIEW}^{\Pi}_A(D_A,\cdot)$. Here, we let $A$ correspond to a set of $b_{N} = t-1$ computation nodes wherein the multi-party computation protocol uses a $(t,t)$-threshold scheme. Then, $D_A$ denotes the data that each computation node in the set $A$ receives from the honest clients, which we can represent with a histogram of keys. Note that the views of the clients that the adversary controls is trivially simulated since clients are only responsible for submitting their own data, so the corresponding view is empty. Let $(D,D')$ be neighbours with respect to $f(D_A,\cdot)$. Let $\Pi_1$ denote the protocol consisting of the clients and dummy generator sending data to the computation nodes and $\Pi_2$ denote the multi-party computation protocol to compute the function $f$. Then $\Pi$ consists of applying these two protocols sequentially. The process of sending data between clients and the computation nodes does not leak any additional information since we assume anonymous, secure channels. The computation nodes only see the shares they are meant to receive and cannot determine their origin. Further, by the information-theoretic security of the threshold scheme, $t-1$ computation nodes do not learn any additional information by compiling their shares together. By our assumption, any metadata associated with the data is stripped. By \Cref{thm.priv}, it immediately follows that $\Pi_1$ satisfies IND-CDP-MPC. That is,
\begin{align*}
&\Pr[\mathcal{A}(\textsc{view}_A^{\Pi_1}(D_A, D))=1] \\
\leq &\exp(\epsilon_L) \cdot \Pr[\mathcal{A}(\textsc{view}_A^{\Pi_1}(D_A, D'))=1].
\end{align*}
Also, given that the multi-party computation protocols ensure statistical privacy, we can simulate the process with an oracle whose output is statistically indistinguishable from the real values. That is, each computation node gains some information from the input and subsequent steps only reveal information indistinguishable from random. This reveals some negligible information in the security parameter $\lambda$. Since the function $f$ satisfies $\epsilon_f$-DP, it follows that $\Pi_2$ satisfies IND-CDP-MPC, i.e.,
\begin{align*}
&\Pr[\mathcal{A}(\textsc{view}_A^{\Pi_2}(D_A, D))=1] \\
\leq &\exp(\epsilon_f) \cdot \Pr[\mathcal{A}(\textsc{view}_A^{\Pi_2}(D_A, D'))=1] + negl(\lambda).
\end{align*}
Thus, given that both $\Pi_1$ and $\Pi_2$ ensure IND-CDP-MPC and $\Pi_2$ can be treated as a black-box, we can now consider the composition of the two protocols. The probability of distinguishing $D$ and $D'$ is bounded by
\begin{align*}
&\Pr[\mathcal{A}(\textsc{view}_A^{\Pi_2,\Pi_1}(D_A, D))=1] \\
= &\int_x (\Pr[\mathcal{A}(\textsc{view}_A^{\Pi_2}(D_A, D,x))=1]) \cdot \Pr[x = \textsc{view}_A^{\Pi_1}(D_A, D)]\,dx \\
\leq &\int_x (\exp(\epsilon_f)\Pr[\mathcal{A}(\textsc{view}_A^{\Pi_2}(D_A,D',x))=1] + negl(\lambda)) \\
&\cdot (\exp(\epsilon_L)\Pr[x = \textsc{view}_A^{\Pi_1}(D_A, D')])\,dx \\
= &\exp(\epsilon_L+\epsilon_f)\Pr[\mathcal{A}(\textsc{view}_A^{\Pi_2,\Pi_1}(D_A, D'))=1] + negl(\lambda).
\end{align*}
\end{proof}

\section{Privacy of the One-Sided Dummy Solution}\label{app:one_sided_dummy}
For a description of the one-sided dummy solution, see \Cref{sec.early}. We consider an implementation of our one-sided dummy solution using the strategy by Mazloom et al.~\cite{mazloom18}. This strategy uses a modified version of the Two-sided Geometric distribution (an over-approximation) to add noise. For completeness, we provide a version of their proofs, in our context, using the standard two-sided Geometric distribution.

\subsection{Background on the Geometric Mechanism}
First, we recall that the PMF of the two-sided geometric distribution is
\begin{equation}
    \mathbb{G}_2(z;\alpha) = \frac{1-\alpha}{1+\alpha} \alpha^{|z|}
\end{equation}
for $\alpha \in [0,1]$.

\begin{definition}[Geometric Mechanism]
    Let $f:\mathcal{X}^n\mapsto \mathbb{R}^k$. The Geometric Mechanism is defined as 
    \begin{equation}
        G(X) = f(X) + \mathbb{G}_2(e^{-\epsilon/\Delta})^\numkeys,
    \end{equation}
    where $\mathbb{G}_2(e^{-\epsilon/\Delta})^\numkeys$ represents $\numkeys$ i.i.d draws from the Geometric distribution with parameter $\alpha = e^{-\epsilon/\Delta}$
\end{definition}
\begin{theorem}[\cite{ghosh2012universally}]\label{DP of geometric}
    The Geometric Mechanism satisfies $\epsilon$-DP
\end{theorem}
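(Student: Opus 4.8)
The plan is to prove this by the standard pointwise likelihood-ratio argument used for the Laplace mechanism, adapted to the discrete setting. Fix neighbouring datasets $X,X' \in \mathcal{X}^n$ and an arbitrary output $z \in \mathbb{Z}^\numkeys$. Since the $\numkeys$ noise coordinates are drawn i.i.d., the probability mass function of $G(X)$ factorizes as $\Pr[G(X)=z] = \prod_{k=1}^\numkeys \mathbb{G}_2(z_k - f_k(X);\alpha)$ with $\alpha = e^{-\epsilon/\Delta}$, and likewise for $X'$. Hence the likelihood ratio factorizes over coordinates, and it suffices to bound each single-coordinate factor and then multiply.

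For a single coordinate I would plug in the closed form $\mathbb{G}_2(w;\alpha) = \frac{1-\alpha}{1+\alpha}\alpha^{|w|}$; the normalizing constant $\frac{1-\alpha}{1+\alpha}$ cancels, leaving the ratio $\alpha^{|z_k - f_k(X)| - |z_k - f_k(X')|}$. Next I would invoke the reverse triangle inequality, $\bigl| \, |z_k - f_k(X)| - |z_k - f_k(X')| \,\bigr| \le |f_k(X) - f_k(X')|$, so the (signed) exponent is at least $-|f_k(X)-f_k(X')|$. Because $\alpha \in (0,1)$, the map $x \mapsto \alpha^x$ is decreasing, so this lower bound on the exponent yields the upper bound $\alpha^{-|f_k(X)-f_k(X')|} = e^{(\epsilon/\Delta)\,|f_k(X)-f_k(X')|}$ on the coordinate ratio.

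Multiplying the per-coordinate bounds and using the definition of $\ell_1$-sensitivity gives $\prod_{k} e^{(\epsilon/\Delta)\,|f_k(X)-f_k(X')|} = e^{(\epsilon/\Delta)\,\|f(X)-f(X')\|_1} \le e^{(\epsilon/\Delta)\cdot\Delta} = e^\epsilon$. Thus $\Pr[G(X)=z] \le e^\epsilon \Pr[G(X')=z]$ for every $z$, and summing over any $T \subseteq \mathbb{Z}^\numkeys$ gives $\Pr[G(X)\in T] \le e^\epsilon \Pr[G(X')\in T]$, i.e.\ $\epsilon$-DP with $\delta = 0$, matching \Cref{DP of geometric}.

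There is no deep obstacle here — the statement is the classical result of Ghosh et al.\ — but the step that warrants care is the direction of the inequality when manipulating powers of $\alpha$: since $\alpha < 1$, a bound on the signed exponent must be flipped when converting it to a bound on $\alpha$ raised to that exponent, and getting this orientation wrong would invert the whole argument. A secondary point I would state explicitly is that, unlike the continuous Laplace case, the argument runs over probability mass functions and a sum over $T$ rather than densities and an integral; this discreteness is exactly why pure DP is attainable and why no measurability subtleties arise.
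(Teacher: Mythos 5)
Your proposal is correct and follows essentially the same route as the paper's own proof sketch: write the ratio of two-sided geometric probability masses, cancel the normalizing constant $\frac{1-\alpha}{1+\alpha}$, apply the (reverse) triangle inequality to the exponent, and use $\alpha=e^{-\epsilon/\Delta}$ to conclude the bound $e^{\epsilon}$. The only difference is that you spell out the per-coordinate factorization and the summation over output sets $T$, whereas the paper compresses this into a one-dimensional sketch with shift $\Delta$; there is no substantive divergence.
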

\noindent\textbf{Proof Sketch}
\begin{equation}
    \frac{\mathbb{G}_2(z;\alpha)}{\mathbb{G}_2(z+\Delta;\alpha)} =
    \frac{\frac{1-\alpha}{1+\alpha} \alpha^{|z|}}{\frac{1-\alpha}{1+\alpha} \alpha^{|z+\Delta|}} \leq 
    \alpha^{-|\Delta|} = e^\epsilon
\end{equation}
where the inequality comes from triangle inequality and the last equality from setting $\alpha = e^{-\epsilon/\Delta}$.

\subsection{Privacy proof}
We consider the one-sided dummy mechanism, defined as 
\begin{equation}
    D_k(\allkv) = q_k(\allkv) + max(0,\beta + \mathbb{G}_2(e^{-\epsilon}))
\end{equation}
where $\beta$ is a constant chosen such that $Pr[\beta + \mathbb{G}_2(e^{-\epsilon})<0]$ is negligible.
\begin{theorem}[\cite{mazloom18}]
     $D_k(\allkv)$ is $(\epsilon, \delta)$-DP
\end{theorem}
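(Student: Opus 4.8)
The plan is to reduce to the untruncated, shifted Geometric Mechanism, which is already pure $\epsilon$-DP, and to absorb the cost of the $\max(0,\cdot)$ clipping into a negligible $\delta$. Concretely, I would introduce the auxiliary mechanism $\tilde D_k(\allkv) = q_k(\allkv) + \beta + \mathbb{G}_2(e^{-\epsilon})$, i.e.\ $D_k$ with the truncation removed. Since neighbouring datasets differ in a single user and, by the distinct-keys assumption, that user holds at most one value for key $k$, we have $|q_k(\allkv) - q_k(\allkv')| \le 1$; hence the $\ell_1$-sensitivity of $q_k$ is $1$, so the relevant Geometric parameter is $\alpha = e^{-\epsilon}$. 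Shifting by the fixed constant $\beta$ is post-processing and leaves the ratio of output probabilities unchanged, so by \Cref{DP of geometric} the mechanism $\tilde D_k$ is $\epsilon$-DP.

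Next I would couple $D_k$ and $\tilde D_k$ through a single draw $g\sim\mathbb{G}_2(e^{-\epsilon})$: the two outputs coincide whenever $\beta + g \ge 0$ and differ only on the clipping event $E = \{\beta + g < 0\}$, whose probability $\delta' := \Pr[\beta + g < 0]$ is independent of the input dataset and negligible by the choice of $\beta$. Then for every output set $T$ and every pair of neighbours $\allkv,\allkv'$,
\begin{align*}
\Pr[D_k(\allkv)\in T]
&\le \Pr[\tilde D_k(\allkv)\in T] + \delta'
\le e^{\epsilon}\Pr[\tilde D_k(\allkv')\in T] + \delta' \\
&\le e^{\epsilon}\bigl(\Pr[D_k(\allkv')\in T] + \delta'\bigr) + \delta'
= e^{\epsilon}\Pr[D_k(\allkv')\in T] + (1+e^{\epsilon})\delta',
\end{align*}
where the first and third steps use that the two mechanisms agree off an event of probability at most $\delta'$ (in either direction of the neighbour relation) and the second step is $\epsilon$-DP of $\tilde D_k$. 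Taking $\delta = (1+e^{\epsilon})\delta'$, which is still negligible, yields $(\epsilon,\delta)$-DP.

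The main obstacle is bookkeeping rather than any sharp inequality: one must check that the event on which $D_k$ and $\tilde D_k$ disagree has dataset-independent probability — it does, because the added noise does not depend on the data — and that this single event simultaneously handles both orderings of the neighbour relation. If a more hands-on argument is preferred, I would instead compare the PMFs $\Pr[D_k(\allkv)=z]$ and $\Pr[D_k(\allkv')=z]$ pointwise: for $z$ strictly above the smallest attainable output the truncation is invisible and the Geometric-Mechanism bound applies verbatim, so the only place where the multiplicative bound can fail is at (or just above) the minimum output, where the clipped mass piles up — and that mass is exactly $\Pr[\beta + g \le 0]$, which is again $O(\delta')$. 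I would present the coupling argument as the proof and optionally include this case analysis to make the origin of $\delta$ explicit.
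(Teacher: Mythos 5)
Your proposal is correct and follows essentially the same route as the paper: both arguments isolate the clipping event $\{\beta+\mathbb{G}_2(e^{-\epsilon})<0\}$, whose probability is data-independent and controlled by the choice of $\beta$, and apply the pure $\epsilon$-DP of the (shifted) Geometric Mechanism off that event. The only difference is bookkeeping: the paper decomposes $\Pr[D(\allkv)\in Z]$ directly over the bad set $\mathcal{B}$ and pays $\Pr[D(\allkv)\in\mathcal{B}]\le\delta$ once, whereas your two-sided coupling with the untruncated mechanism $\tilde D_k$ costs $(1+e^{\epsilon})\delta'$ — a slightly worse but still negligible additive term.
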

\begin{proof}
Using Theorem~\ref{DP of geometric} we can obtain the approximate DP guarantee as follows. First, we define a set $\mathcal{B}$ of all outcomes where we add negative noise. Conditioning on not being in this set we get
\begin{eqnarray}
    Pr[D(\allkv) \in Z\setminus \mathcal{B}] &=& \prod_{k=0}^\numkeys Pr[D_k(\allkv) \in Z_k\setminus \mathcal{B}]\\
    &\leq& \prod_{k=0}^\numkeys e^\epsilon Pr[D_k(\allkv') \in Z_k\setminus \mathcal{B}]\\
    &=& e^\epsilon Pr[D(\allkv') \in Z\setminus \mathcal{B}]
\end{eqnarray}

Now, using this, we obtain the final result
\begin{align}
    Pr[D(\allkv) \in Z] &= Pr[D(\allkv) \in Z\setminus \mathcal{B}] + Pr[D(\allkv) \in \mathcal{B}]\\
    &\leq e^\epsilon Pr[D(\allkv') \in Z\setminus \mathcal{B}] + Pr[D(\allkv) \in \mathcal{B}]\\
    &\leq e^\epsilon Pr[D(\allkv') \in Z] + Pr[D(\allkv) \in \mathcal{B}]
\end{align}

Thus as long as we choose $\beta$ such that $Pr[D(\allkv) \in \mathcal{B}] \leq \delta$, we have approximate DP.
\end{proof}
\subsection{How many dummies to add}
\label{sec:dummies-to-add}
Next, we specify how to choose $\beta$ to satisfy approximate DP. To do this, we begin by calculating
\begin{align}
    Pr[\beta + \mathbb{G}_2(\alpha)<0]&= \sum\limits_{z=-\infty}^0 \frac{1-\alpha}{1+\alpha} \alpha^{|z-\beta|}\\
    &=\sum\limits_{z=2\beta}^\infty \frac{1-\alpha}{1+\alpha} \alpha^{|z-\beta|}\\
    &=\frac{1-\alpha}{1+\alpha} \alpha^\beta \sum\limits_{z=0}^\infty  \alpha^z\\
    &=\frac{\alpha^\beta}{1+\alpha}
\end{align}
then, applying the union bound we get that
\begin{equation}\label{eqn:delta_case}
    Pr[D(\allkv) \in \mathcal{B}] \leq \frac{\lambda \alpha^\beta}{1+\alpha} \leq \delta.
\end{equation}
Finally, we can rearrange (\ref{eqn:delta_case}) to obtain a formula for $\beta$
\begin{equation}
    \beta > \log_\alpha\left(\frac{\delta(1+\alpha)}{\lambda}\right)=-\frac{1}{\epsilon}\ln\left(\frac{\delta(1+e^{-\epsilon})}{\lambda}\right).
\end{equation}

\subsection{Comparison with our protocol}\label{app:compare}
\begin{figure}[ht]
\centering
		\includegraphics[scale=0.5]{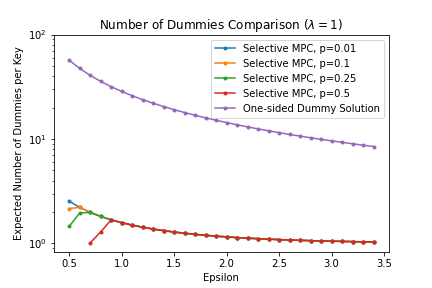}
	    \caption{We consider the expected number of dummies for various epsilons. $r$ is chosen to minimize the epsilon spent
	    and where applicable delta is fixed to $\delta=2^{-40}$.} 
		\label{fig:expt_dummies}
\end{figure}
The primary difference between the one-sided dummy solution and our final protocol is how the number of dummies is selected. In the one-sided dummy solution, dummies are chosen from a geometric distribution shifted by some amount, which we call $\beta$. The expected value of a single sample of the shifted and truncated two-sized geometric distribution is at least $\beta$. Additionally, each element is given to all nodes rather than a subset of them. Considering these changes, the client-to-server communication of the one-sided dummy solution is $\Omega\left(\numkv\compnodes+\numkeys\beta\compnodes\right)$. The computation cost of this protocol in the MPC phase is the same as our solution.

Figure \ref{fig:expt_dummies} demonstrates the difference in the expected number of dummies for the one-sided dummy solution and our final solution (selective MPC). It is clear that selective MPC produces less dummies and is more efficient.

\change{
\subsection{Using the noise from dummies}
A natural question is if the dummies used to achieve a DP leakage could also be used to make the output private (instead of adding additional Laplace noise in a distributed manner). That is, instead of discarding the dummies, we may count them as real values to add noise to the frequency statistic. We remark that this is not possible in selective MPC. Due to the fact that we add dummies following a one-sided geometric distribution, the output cannot satisfy DP (even approximate DP as the $\delta$ is prohibitively large). However, the dummies could be used in the one-sided dummy solution. Specifically, for frequency estimation, we have already paid the privacy cost in $\leaknoise$, and could simply publish the view of the computation node as the noisy frequency. One would likely want to subtract $\beta$ from the final result to improve accuracy. This is private since $\beta$ is derived from public values $\epsilon$, $\delta$, and $\lambda$. The downside to this approach is that it inherits the downsides of the one-sided dummy solution: approximate DP instead of pure DP and increased communication due to a large number of dummies. It is also not clear how this approach could be extended to mean estimations.
}

\section{Additional Experiments}\label{app.input_val_exp}
\paragraph{Input Validation}
\change{We examine the cost of validating the client's input. We implement the $x\cdot(1-x) \stackrel{?}{=}0$ check described in \Cref{sec.malicious}, but this can easily be extended to checking arbitrary constraints. \Cref{tab:input_valid_time} shows the time to perform input validation using 3 computation nodes. The input validation can be performed in parallel so we conduct the experiment for 10, 100, 1000, and 10,000 input data points using the Amazon EC2 C5 instances described in \Cref{sec:experiments}. Similar to the other experiments, we use local, remote, and distant network setups. The runtimes for input validation remain reasonable, particularly compared to the runtime of the computation phase, which suggests that input validation does not impose an impractical burden on the protocol.}
\begin{figure}
\begin{tabular}{cccc}
        \toprule
        \# Inputs & Local & Remote & Distant \\
        \midrule
        10 & 0.08~(0.00) & 1.01~(0.07) & 2.02~(0.13) \\ \midrule
        100 & 0.80~(0.03) & 1.31~(0.06) & 2.38~(0.12) \\ \midrule
        1000 & 0.73~(0.02) & 2.02~(0.09) & 3.75~(0.16) \\ \midrule
        10000 & 0.86~(0.05) & 9.47~(0.15) & 18.34~(0.14) \\
        \bottomrule
    \end{tabular}
    \captionof{table}{Run time (in seconds) to validate multiple inputs with 3 computation nodes using maliciously secure MPC. Measurements are means with standard deviations in parentheses.}
    \label{tab:input_valid_time}
\end{figure}

\end{document}